\newcommand{\ket}[1]{\left|#1\right\rangle}
\newtheorem{thm}{Theorem}
\newtheorem{conj}{Conjecture}
\newtheorem{lemma}{Lemma}
\title{\LARGE Negativity Bounds for Weyl-Heisenberg Quasiprobability Representations}
\author{\normalsize John B. DeBrota and Christopher A. Fuchs\\
\normalsize University of Massachusetts Boston}
\date{}
\begin{document}

\maketitle
\begin{abstract}
    The appearance of negative terms in quasiprobability representations of quantum theory is known to be inevitable, and, due to its equivalence with the onset of contextuality, of central interest in quantum computation and information. Until recently, however, nothing has been known about how much negativity is necessary in a quasiprobability representation. Zhu \cite{Zhu2016} proved that the upper and lower bounds with respect to one type of negativity measure are saturated by quasiprobability representations which are in one-to-one correspondence with the elusive symmetric informationally complete quantum measurements (SICs). We define a family of negativity measures which includes Zhu's as a
    special case and consider another member of the family which we call ``sum negativity.'' We prove a sufficient condition for local maxima in sum negativity and find exact global maxima in dimensions $3$ and $4$. Notably, we find that Zhu's result on the SICs does not generally extend to sum negativity, although the analogous result does hold in dimension $4$. Finally, the Hoggar lines in dimension $8$ make an appearance in a conjecture on sum negativity.
\end{abstract}
\section{Introduction}
The mathematical machinery of quantum theory has persisted without substantial modification for nearly a century, but we are still waiting for a compelling set of physical principles upon which to hang the theory's predictions. It is the hope of some quantum foundations researchers that looking at standard quantum theory re-represented in an appropriate fashion will help suggest these
principles \cite{weinberg17}.

Probabilities are central objects in canonical quantum theory; at the end of a quantum mechanical calculation, we are left with a probability distribution or a simple consequence of one. It is tempting, therefore, to behave as though quantum theory \textit{gives} us a probability distribution---\textit{the} probabilities for a set of outcomes. Careful consideration reveals, however, that \textit{given a quantum state}, quantum theory allows us to calculate a probability distribution.
But where does the quantum state come from in the first place? The apparatus of quantum theory is unable to say. In practice, an experimenter eventually settles on a quantum state for her preparation procedure after a suite of tests and calibrations, and, ultimately, statistical methods go into the state determination itself. That is, we have probabilities at the beginning and probabilities at the end  with the formal apparatus of quantum theory gluing it all together. It is possible that what goes on
in between these ends stands alone, but such a circumstance is far from guaranteed. The proper understanding of quantum theory may depend upon conceiving of probability theory in the proper way.

In fact, revisiting and deciding on the proper understanding of probability theory is the starting point of QBism \cite{HerosHandbook,LordVoldemort,RMP}. QBists take a strict personalist Bayesian \cite{DeFinetti90,Lad96,Bernardo94} stance on probability theory. A probability is a valuation an agent places on his or her degree of belief in a possible outcome, nothing more and nothing less. As a consequence, probabilities are not empirically determinable quantities because they do not
independently exist outside of an agent's mind.

As an example, consider repeated flips of a coin. A frequentist conception of probability asserts that the probability of heads for the coin is the long-run ratio of number of heads to number of flips. The Bayesian first points out that to regard the coin flipped at different times as an ``equivalent'' or ``exchangeable'' process amounts to a belief the experimenter has about the situation---perhaps nothing is wrong with this belief, but she should be cognizant of its influence on the
conclusions of the experiment. Secondly, the Bayesian asks just how many times the experimenter plans to actually flip the coin before she decides that the relative frequency is the probability of heads for that coin. If she flips it a finite number of times, by her own admission, any frequency is technically possible (although she believes some are unlikely). The usual answer to such a question is that she plans to flip it until the deviations in the ratio with further experimentation
are small enough
so as to be negligible. In other words, in order to \textit{define} the likelihood of an outcome, she asserts that the ratio obtained by a finite series of experiments is \textit{likely} to be close to the ``true probability''. The circularity of such an argument should be evident and worrying to anyone espousing the frequentist paradigm. Properly understood, then, probabilities are single-case; no probability is meaningfully right or wrong by any \textit{external} criterion. The knee-jerk
reaction to this statement is to exasperatedly throw up one's hands and exclaim that Bayesianism is just the claim that probability theory is useless! Nothing could be further from the truth: Although a probability is not subject to objective external validation or invalidation, it commands the same sway over our lives that it would if it were; if an agent wishes to avoid sure loss---to avoid being demonstrably stupid---she must take steps to ensure that she never assigns
probabilities which mutually contradict each other. When a theory tells us what we ``should'' do or ``strive for'', it is a \textit{normative} theory. In other words, the personalist Bayesian view is that probability theory is a normative theory. In this community, compatible probabilities are called \textit{coherent}. Remarkably, nearly all of the standard rules of probability theory are consequences of coherence \cite{Lad96}.

One consequence that we will reference soon
is the Law of Total Probability (LTP):
\begin{equation}\label{ltp}
    q(j)=\sum_i p(i)r(j|i).
\end{equation}
The LTP describes a scenario where two actions are taken, one after another. $p(i)$ represents the probability we ascribe to getting outcome $i$ from the first action, $r(j|i)$ is the probability we ascribe to getting outcome $j$ from the second action conditioned on outcome $i$ for the first action, and $q(j)$ is the probability we ascribe to outcome $j$ for the second action, not conditioned on anything other than the operational procedure we have laid out. The commitment to coherence alone (and independent from any possible nature of reality) requires that our probabilities assigned at any given moment should hold together in
accordance with \eqref{ltp}.

As we explain below, it is possible to represent any quantum state as a single probability distribution over the possible outcomes of an appropriately chosen measurement. If we take this fact seriously, a quantum state is conceptually nothing more than a probability distribution. In QBism, all of the personalist Bayesian properties of probability theory carry over to quantum states; that is, quantum states, like probabilities, are valuations of belief for future
experiences. However, application of the rules of quantum theory reveal that not all of the probability distributions in the probability simplex correspond to a valid quantum state. Just as in probability theory where an agent strives to be consistent with herself in her probability assignments, an agent should not ascribe a probability distribution she knows to be in conflict with the quantum mechanical formalism. In this way we arrive at an understanding that quantum theory is an empirically-motivated normative addition to probability theory.

If the functional form of the additions to probability calculus are cumbersome, then there may be no reason to adopt it for everyday use---furthermore, it may not shed any light on the ``nature of reality.'' What would constitute a nice looking addition to probability theory? One possibility would be if the normative rules of quantum theory could be made to mirror those of probability theory in a suggestive way. It turns out that just this sort of situation can be made to occur.

An \textit{informationally complete quantum measurement (IC-POVM)} for a Hilbert space $\mathcal{H}_d$ is a set of at least $d^2$ positive semi-definite operators $E_i$ which span $\mathcal{L}(\mathcal{H}_d)$, the vector space of linear operators on $\mathcal{H}_d$, and satisfy $\sum_{i} E_i=\mathbb{I}$. When such a measurement consists of exactly $d^2$ elements, density matrices $\rho$ and the Born rule probabilities $p(i)=\text{Tr}(\rho E_i)$ are in bijective correspondence because the $E_i$ form a basis for $\mathcal{L}(\mathcal{H}_d)$. Such minimal IC-POVMs are known to exist in all dimensions
\cite{Caves02b}. Which
one we choose for a representation, however, stands a chance of revealing or obscuring the properties which probability distributions equivalent to quantum states must have. Very often in mathematics and physics, a hard problem becomes easy when we choose the right basis. For example, the Eddington--Finkelstein coordinates revealed that the event horizon of a non-rotating black hole is not a physical singularity. Is there a best or particularly nice IC-POVM which will reveal hidden properties of quantum theory?

It is not possible for an IC-POVM to be an orthonormal basis \cite{Spekkens08,Ferrie08,Appleby07}, so perhaps our first hope is ruled out. If the elements of an IC-POVM cannot be orthogonal, can they at least be equiangular? It turns out that they can \cite{appleby07b}. That is, we can find a set of $E_i$ which satisfy $\text{Tr}(E_iE_j)=c$ for $i\neq j$. Can such a POVM consist of only rank-one matrices? Remarkably for such a simply-stated question, it is not generally known. A set of $d^2$ rank-one matrices $\Pi_i$ such that
\begin{equation}
    \text{Tr}\left(\Pi_i\Pi_j\right)=\frac{d\delta_{ij}+1}{d+1}
\end{equation}
defines an IC-POVM $E_i=\frac{1}{d}\Pi_i$ called a \textit{Symmetric IC-POVM (SIC)} \cite{Zauner99,Caves02d,Renes04}. High-precision numerical SICs have been found
in all dimensions $2$--$151$ \cite{sicquestion,scott17} and in a few sporadic higher dimensions. In many cases, exact SICs have been constructed among these dimensions as well \cite{newexact}. All indications are that SICs exist in all dimensions, but the proof continues to evade us.

If a SIC exists in dimension $d$, it is possible to rewrite the Born rule in a uniquely simple form analogous to the LTP, an equation called the \textit{urgleichung} \cite{RMP,LordVoldemort}, German for ``primal equation'' in reference \cite{RMP}:
       \begin{equation}\label{primal}
              q(j)=\sum_i \left[(d+1)p(i)-\frac{1}{d}\right]r(j|i),
      \end{equation}
      where $q(j)$ are the probabilities for obtaining outcome $j$ of a general quantum measurement, $p(i)$ is the probability an agent ascribes to obtaining outcome $i$ in the imagined scenario where a SIC measurement is performed on the system instead, and $r(j|i)$ are the probabilities for obtaining the equivalent outcome $j$ conditional on obtaining outcome $i$ that the agent ascribes in the imagined scenario (See \cite{HerosHandbook} for a detailed exposition). It is
      essential to recognize the operational difference between the urgleichung and the LTP: the urgleichung describes a scenario where the first measurement is not \textit{actually} made---just imagined. If we actually planned to implement the first measurement, our probabilities must hold together according to the familiar LTP. We take equation \eqref{primal} very seriously. In fact, it motivated the
most recent development in QBism---a reconstruction of quantum theory featuring a generalization of the urgleichung as the key assumption \cite{qplex}. See references \cite{Khrennikov15} and \cite{Khrennikov16} for critical review and discussion of the urgleichung in QBism as well as comparison to other contexts.

Probability theory itself has no tether to physical reality---rather it is a tool that anyone, \textit{anywhere}, can use to manage their expectations for further experiences. Those expectations will certainly be influenced by deeply-held convictions that the agent has about the nature of reality around them, but the way those probabilities must hang together if the agent is to be coherent is unaffected. Quantum mechanics, on the other hand, although evidently a normative theory like probability
theory, \textit{is} tethered to physical reality. A reformulation of quantum theory which brings to the forefront this normative structure allows us to examine the threads of this tether without confusing the subjective and the objective. This way, we may hopefully more readily determine the aspects of reality which forced quantum theory to be the way that it is. The LTP is a direct consequence of coherence in one's probability assignments. Is the urgleichung, which presents as an almost trivial modification of the LTP, on the right track for expressing the conditions for a kind of quantum coherence that an agent should strive for by virtue of being in our universe? We would like to accumulate as much evidence as possible that it is. Often one can gather more evidence for a sentiment simply by looking where it's least expected. In this case, Huangjun Zhu recently demonstrated additional evidence for this line of reasoning by instead departing from probability theory \cite{Zhu2016}.

If we formally relax the positivity condition for minimal IC-POVMs (keeping the fact that they sum to the identity), we are dealing with the larger space of Hermitian operator bases. Denoting such an operator $F_i$, if we also keep the form of the Born rule for a quantum state $\rho$, we obtain a set of real numbers $\mathfrak{p}(i)=\text{Tr}(\rho F_i)$, some of which may be negative, such that $\sum_i\mathfrak{p}(i)=1$. This set of numbers is referred to as a \textit{quasiprobability vector} (we will always denote
quasiprobability vectors with $\mathfrak{fraktur}$ script) and a quasiprobability vector obtained in this way from a quantum state is called a \textit{quasiprobability representation} of the state.

Quantum opticians have benefited from the ease of plotting quasiprobability distributions over phase space \cite{Schleich01}, so there is some utility in their use, but what is a quasiprobability? If a probability is a valuation of belief, what meaning can we attach to a quasiprobability? There does not appear to be a simple meaning---some attempts at attaching operational
substance to quasiprobabilities have been made, for example, see references \cite{Prugovecki73,Feynman87,Khrennikov97,Khrennikov09,Burgin10}, but if these solutions get us no closer to understanding \textit{why} quantum theory is the normative probability calculus an agent of our universe should use to productively navigate, then they amount to duct tape over a structural weakness. However unsatisfying a quasiprobability may be on principled grounds, it turns out that permitting them for the time being gets us something desirable in return: the Born rule obtains an even closer functional analogy to the LTP.

Consider a Hermitian operator basis $\{F_i\}$ such that $\sum_i F_i=1$ and a dual basis\footnote{A dual basis is one for which the bases considered together are biorthogonal, $\text{Tr}F_iQ_j=\delta_{ij}$.} $\{Q_j\}$ constrained to satisfy $\text{Tr}(Q_j)=1$. From a state $\rho$ and an arbitrary POVM $\{G_j\}$ we form the quasiprobabilities $\mathfrak{p}(i)=\text{Tr}(\rho F_i)$ and the conditional quasiprobabilities $\mathfrak{r}(j|i)=\text{Tr}(Q_iG_j)$. Using the identity
\begin{equation}\label{innerprodidentity}
    \sum_j\text{Tr}(BF_j)\text{Tr}(Q_j C)=\text{Tr}(BC),
\end{equation}
we can rewrite the Born rule as
            \begin{equation}\label{qprimal}
                q(j)=\text{Tr}(\rho G_j)=\sum_i\mathfrak{p}(i)\mathfrak{r}(j|i).
            \end{equation}
            Like the urgleichung, \eqref{qprimal} looks very much like the LTP. In fact, \eqref{qprimal} is functionally \textit{identical} to the LTP. The difference is that this equation is written in terms of quasiprobabilities instead of probabilities. Negativity must pop up somewhere, for it is known that negativity \textit{must}\footnote{That is, it is impossible to represent quantum theory in a way which eliminates the appearance of negativity in \textit{both} $\mathfrak{p}$
            and $\mathfrak{r}$ in \eqref{qprimal} for all quantum states and POVMs. It is possible, in general, to eliminate the negativity appearing in one or the other.} appear in quasiprobability representations of
quantum theory \cite{Ferrie08,Ferrie10}.

Another advantage when we are not burdened with positivity is that we may choose the $F_i$ to form an orthogonal basis for operator space. If a basis is orthogonal, it is proportional to its dual basis and called \textit{self-dual}. In this case, the sum constraint on the basis automatically fixes $\text{Tr}(Q_j)=1$ and the constant of proportionality $F_j=\frac{1}{d}Q_j$. Here we depart from Zhu's terminology and refer to a quasiprobability representation obtained from this sort of self-dual basis as a \textit{Q-rep}. Q-reps
account for most of the quasiprobability representations considered in the
literature \cite{Wootters87,Wootters04,Ferrie09,Ferrie11}. Importantly, however, Q-reps do not account for all finite dimensional quasiprobability representations; any nonorthogonal basis provides an example outside of this set. We can identify a Q-rep with the dual basis, $\{Q_i\}$, which defines it. We
will reserve $\mathfrak{q}$ for quasiprobability representations of states with respect to a Q-rep $\{Q_j\}$ (when there is no index, we are referring to the full vector).

Mathematically speaking, a probability is a type of quasiprobability; that is, the kind without any negative entries. If we believe that the urgleichung differing from
the LTP captures some aspect of the essential difference between quantum and classical, then some of this essential difference is also contained in the difference between the LTP and \eqref{qprimal}, that is, in the appearance of negativity in quasiprobability representations of quantum theory. Indeed, Spekkens showed that the presence of negative elements in quasiprobability representations of quantum theory and the impossibility of noncontextual hidden variable models are equivalent notions of
nonclassicality \cite{Spekkens08}. Therefore, insofar as we think contextuality is an important ingredient in the quantum--classical distinction, we should be interested in negativity as well. Investigating the negativity in Q-reps also seems to be a promising approach to identifying exactly what advantages quantum computation affords us over classical computation, for example, Veitch \textit{et al.} showed that negativity is a resource for quantum computation \cite{Emerson12} and Howard
\textit{et al.} recently showed that contextuality enables universal quantum computation via `magic state' distillation \cite{Emerson14}. Additionally, and, as we will see, of particular note for this paper, Pashayan \textit{et al.} have shown that a value related to the sum of the negative entries in a quasiprobability representation may be thought of as a measure that bounds the efficiency of a classical estimation of probabilities \cite{Bartlett15}. In light of these facts, Zhu's recent result is especially exciting. In his paper, Zhu developed a natural measure of negativity for Q-reps and established strict upper and lower bounds for this measure in one-to-one correspondence with SICs in each dimension.
SICs are related to the bounds of this measure of negativity and the appearance of negativity in quasiprobability representations of quantum theory seems to contain some hints toward what ``quantum'' really means. Are there more hints to be found? Motivated to answer this question, we investigate how robust Zhu's result is to modifications in the negativity measure.

In Section \ref{negandsumneg} we motivate and define a general negativity measure for quasiprobability representations which includes Zhu's measure and the measure of principle interest in this paper, called sum negativity, as special cases and state Zhu's theorem that the bounds on his measure of negativity are achieved by Q-reps in one-to-one correspondence with SICs. In Section \ref{sumnegsic} we address the sum negativity for the SIC Q-reps in the first few dimensions. In Section \ref{whnqpr} we argue that Weyl-Heisenberg covariant Q-reps are a natural subset to
consider while looking for counterexamples to Zhu's theorem and establish the explicit conditions for a Weyl-Heisenberg covariant Q-rep in dimension $3$. Section \ref{sumnegd3} contains the main results: we explicitly demonstrate that Zhu's theorem does not generally extend to sum negativity in either bound, we prove a general sufficiency theorem for a Q-rep being a local maximum for sum negativity, use this theorem to prove the exact upper bound for sum
negativity among Q-reps in dimension $3$, and state a conjecture regarding the lower bound among Weyl-Heisenberg Q-reps. In Section \ref{further} we again apply our theorem to prove that, although not generally the case, one of the SIC Q-reps achieves the exact upper bound for sum negativity in dimension $4$. We also briefly discuss the sum negativity for one of the Hoggar SIC Q-reps in dimension $8$. In Section \ref{discussion} we discuss further questions and directions.

\section{Negativity and Sum Negativity}\label{negandsumneg}

Setting aside Q-rep vectors for a moment, we start by proposing a family of negativity measures for general quasiprobability vectors. Qualitatively speaking, we want a measure of the ``amount'' of negativity that appears in a vector with entries which sum to $1$. A few candidates immediately stand out as especially natural measures: Perhaps a measure proportional to the sum of the negative elements or to the most negative element appearing in the quasiprobability vector---indeed, as we will see, the latter choice is taken by Zhu in \cite{Zhu2016}. We want a family of measures of negativity to meaningfully capture the deviation or ``distance'' from quasiprobabilities which have no negative elements. So we are faced with the task of measuring something we might understand as a distance for elements in a finite dimensional vector space (constrained by the normalization condition, of course). From this vantage point, the $L^p$-norms offer a very compelling family of generalized distance measures which we might utilize. Recall the definition of the $L^p$-norm of a vector $\mathbf{x}$,
\begin{equation}
    |\!|\mathbf{x}|\!|_p:=\left(\sum_{i=1}^n|x(i)|^p\right)^{\!1/p},
\end{equation}
and the limiting expression for $p\to\infty$,
\begin{equation}
    |\!|\mathbf{x}|\!|_\infty=\max{\{|x_1|,|x_2|,\ldots,|x_n|\}}.
\end{equation}
We might hope that the $L^p$-norm of a quasiprobability vector will itself be a useful quantity which is immediately related to the negativity present in the vector. If $p=1$, for example, we can see that if there are any negative elements in the quasiprobability vector $\mathfrak{p}$ then $|\!|\mathfrak{p}|\!|_1>1$, so perhaps we could associate the amount by which the $L^p$-norm is larger than $1$ with the negativity of $\mathfrak{p}$. However, $|\!|\mathfrak{p}|\!|_p\leq|\!|\mathfrak{p}|\!|_q$ when $p>q$ and, in fact, $|\!|\mathfrak{p}|\!|_p$ need not be greater than $1$ even when negativity is present. So what we really want is to measure the deviation of only the negative part of the quasiprobability vector from the zero vector, disregarding all positive entries entirely. With this in mind, define the \textit{negative part} of a quasiprobability vector $\mathfrak{p}$ to be the vector
\begin{equation}\label{negpart}
        \mathfrak{p}^{(-)}(j):=\frac{|\mathfrak{p}(j)|-\mathfrak{p}(j)}{2}
    \end{equation}
    which replaces the positive elements of $\mathfrak{p}$ with zero and the negative elements with their absolute value. This definition makes it easy to isolate properties of the negative elements of a quasiprobability vector. Now that we have done away with the positive entries in our quasiprobability vector, we define the \textit{$N^p$ negativity} of a quasiprobability vector $\mathfrak{p}$ to be the $L^p$-norm of the negative part of a quasiprobability vector:
\begin{equation}\label{npneg}
    N^p(\mathfrak{p}):=\big|\!\big|\mathfrak{p}^{(-)}\big|\!\big|_p.
\end{equation}
We will refer to the special cases $N^1$ and $N^\infty$, which we see are equivalent to the two natural candidates proposed above, as the \textit{sum negativity}\footnote{Veitch \textit{et al.} use the term sum negativity specifically for the sum of the negative elements of the discrete Wigner function for a quantum state \cite{Emerson14b} whereas we will be considering the equivalent notion with respect to any Q-rep.} and the \textit{ceiling negativity} respectively.

Zhu defines the negativity of a quantum state $\rho$ with respect to a Q-rep $\{Q_j\}$ to be $d$ times the magnitude of most negative element appearing in the quasiprobability representation $\mathfrak{q}(j)=\text{Tr}(\rho F_j)$. In our framework, this corresponds to $d$ times $N^\infty(\mathfrak{q})$. He then defines the negativity of a Q-rep itself to be the maximum of this value over all of quantum state space. In our framework, this corresponds to $d$ times $\max_\rho
N^\infty(\mathfrak{q})$. Thus, in the general case, we define the \textit{$N^p$ negativity\footnote{We have chosen to omit multiplication by $d$ in our negativity definitions so that the negativity can more immediately be associated with the negative values in a quasiprobability vector. As the dual basis is calculationally easier to work with, a downside of our convention is that factors of $1/d$ crop up more frequently.} of a quantum state $\rho$ with respect to a Q-rep} $\{Q_j\}$ to be
\begin{equation}\label{neg_rho_wrt_Q}
    N^p(\rho,\{Q_j\}):=N^p(\mathfrak{q}),
\end{equation}
and the \textit{$N^p$ negativity of a Q-rep} $\{Q_j\}$ to be
\begin{equation}\label{neg_Q}
    N^p(\{Q_j\}):=\max_\rho N^p(\rho,\{Q_j\}).
\end{equation}
Since $L^p$-norms are convex and nondecreasing for $p \geq 1$ on the positive reals \cite{bhatia97} and the negative part map \eqref{negpart} is a convex function, the composition \eqref{neg_rho_wrt_Q} is also convex \cite{Bertsekas03}. This means the maximum occurs on the boundary of the domain so we may take the maximum in \eqref{neg_Q} to be over pure states in these cases.

As it will be useful later, note that the $L^2$ norm of a Q-rep vector corresponding to a pure state always equals $\sqrt{1/d}$. To see this, let  $B=C=\rho$ and $\text{Tr}(\rho^2)=1$ in \eqref{innerprodidentity}. This is another reason why the $L^p$ norms are not a good choice for a family of negativity measures.

Zhu notes that the ceiling negativity of a Q-rep $\{Q_j\}$ takes the simple form
\begin{equation}\label{negativity}
    N^\infty(\{Q_j\})=\frac{1}{d}\left|\min_j\lambda_{\text{min}}(Q_j)\right|,
\end{equation}
where $\lambda_\text{min}(Q_j)$ is the minimal eigenvalue of $Q_j$.

If a SIC, denoted $\{\Pi_j\}$, exists in dimension $d$, we may construct two Q-reps $\{Q^+_j\}$ and $\{Q^-_j\}$ called the \textit{SIC Q-reps}:
\begin{equation}\label{sicframes}
    Q_j^\pm=\mp \left(\sqrt{d+1}\right)\Pi_j+\frac{1}{d}\left(1\pm\sqrt{d+1}\right)\mathbb{I}
\end{equation}
which have ceiling negativities
\begin{equation}
   N^+ \equiv N^\infty(\{Q^+_j\})= \frac{(d-1)\sqrt{d+1}-1}{d^2}, \quad N^- \equiv N^\infty(\{Q^-_j\})= \frac{\sqrt{d+1}-1}{d^2}.
\end{equation}
We introduce these Q-reps because of the following theorem.

\begin{thm}[Zhu]\label{zhu}
    Every Q-rep $\{Q_j\}$ in dimension $d$ satisfies $N^-\leq N^\infty(\{Q_j\})\leq N^+$. The lower bound is saturated if and only if $\{Q_j\}$ has the form $\{Q^-_j\}$ where $\Pi_j$ is a SIC. If $\{Q_j\}$ is group covariant, then the upper bound is saturated if and only if $\{Q_j\}$ has the form $\{Q^+_j\}$.
\end{thm}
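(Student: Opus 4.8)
The plan is to reduce both inequalities to a single-matrix spectral fact and then use the self-duality relations to pin down the extremal cases. The only data I need about an element of a Q-rep are that $\text{Tr}(Q_j)=1$ and, from biorthogonality with $F_j=\tfrac1d Q_j$, that $\text{Tr}(Q_j^2)=d$ and $\text{Tr}(Q_iQ_j)=d\,\delta_{ij}$. Writing the eigenvalues of a fixed $Q_j$ as $\mu_1\le\cdots\le\mu_d$, the first two conditions read $\sum_k\mu_k=1$ and $\sum_k\mu_k^2=d$. Since \eqref{negativity} expresses $N^\infty(\{Q_j\})$ entirely through $\min_j\lambda_{\text{min}}(Q_j)$, the whole theorem becomes a question about how large and how small the least eigenvalue of such a matrix can be. So first I would establish, for every single $Q_j$, the two-sided bound
\begin{equation*}
\frac{1-(d-1)\sqrt{d+1}}{d}\ \le\ \lambda_{\text{min}}(Q_j)\ \le\ \frac{1-\sqrt{d+1}}{d};
\end{equation*}
because $\min_j\lambda_{\text{min}}(Q_j)$ then lies in the same interval and $N^\infty(\{Q_j\})=-\tfrac1d\min_j\lambda_{\text{min}}(Q_j)$ by \eqref{negativity}, this reproduces $N^-\le N^\infty(\{Q_j\})\le N^+$ at once.

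For the lower bound on $\lambda_{\text{min}}$ I would apply Cauchy--Schwarz to the $d-1$ largest eigenvalues, $\sum_{k\ge2}\mu_k^2\ge(\sum_{k\ge2}\mu_k)^2/(d-1)$, and combine it with $\sum_{k\ge2}\mu_k=1-\mu_1$ and $\sum_{k\ge2}\mu_k^2=d-\mu_1^2$ to get a quadratic in $\mu_1$ whose discriminant factors as $(d-1)^2(d+1)$. The step I expect to be the main obstacle is the \emph{upper} bound on $\lambda_{\text{min}}$, because naive Cauchy--Schwarz yields only the useless estimate $\mu_1\le\frac{1+(d-1)\sqrt{d+1}}{d}$. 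The remedy is to use the constraint $\mu_k\ge\mu_1$ in the opposite sense: over the polytope $\{\mu_k\ge\mu_1,\ \sum_{k\ge2}\mu_k=1-\mu_1\}$ the convex function $\sum_{k\ge2}\mu_k^2$ is maximized at an extreme point where all but one of the $\mu_k$ equal $\mu_1$. Demanding that this maximum be at least the required $d-\mu_1^2$ gives $d\mu_1^2-2\mu_1-1\ge0$, and since $\mu_1\le 1/d$ excludes the upper branch, exactly $\mu_1\le\frac{1-\sqrt{d+1}}{d}$ survives.

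Next I would extract the equality conditions. In each single-matrix bound, equality forces the extremal spectrum---$d-1$ equal eigenvalues and one outlier---so the saturating $Q_j$ must take the form $\alpha\Pi_j+\beta\mathbb{I}$ for a rank-one projector $\Pi_j$, with $(\alpha,\beta)=(\sqrt{d+1},\tfrac1d(1-\sqrt{d+1}))$ at the upper extreme and $(-\sqrt{d+1},\tfrac1d(1+\sqrt{d+1}))$ at the lower; these are precisely $Q_j^-$ and $Q_j^+$. For the theorem's lower bound, saturation means $\min_j\lambda_{\text{min}}(Q_j)$ attains its \emph{largest} admissible value, and since every $\lambda_{\text{min}}(Q_j)$ is already $\le\frac{1-\sqrt{d+1}}{d}$, this can occur only if \emph{every} $Q_j$ sits at that extreme; hence all are simultaneously of $Q_j^-$ form, with no extra hypothesis. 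For the upper bound, by contrast, only the single $Q_{j_0}$ realizing the minimum is pinned down---which is exactly why group covariance is assumed: covariance makes all the $Q_j$ unitarily equivalent, so if one carries the extremal $Q^+$ spectrum then so do the rest.

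Finally I would show the projectors assembled this way form a SIC. Imposing $\text{Tr}(Q_iQ_j)=0$ for $i\ne j$ on $Q_j=\pm\sqrt{d+1}\,\Pi_j+\beta\mathbb{I}$ and expanding produces a linear equation for $\text{Tr}(\Pi_i\Pi_j)$; a short computation in which the cross terms and the $\beta^2 d$ term conspire collapses it to $\text{Tr}(\Pi_i\Pi_j)=\frac{1}{d+1}$ for all $i\ne j$. Together with $\text{Tr}(\Pi_i^2)=1$ and the fact that a Q-rep basis has exactly $d^2$ elements, this is the defining condition of a SIC, finishing both ``only if'' directions. The ``if'' directions are immediate: feeding the SIC Q-reps \eqref{sicframes} into \eqref{negativity} returns $N^-$ and $N^+$ via the eigenvalue computation already carried out.
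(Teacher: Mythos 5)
The paper itself contains no proof of this statement: Theorem \ref{zhu} is quoted from Zhu \cite{Zhu2016} with attribution and used as an imported result, so there is no internal argument to compare yours against. Judged on its own merits, your proof is correct and self-contained. The reduction to a single-matrix spectral question is legitimate, since each Q-rep element satisfies exactly $\text{Tr}(Q_j)=1$ and $\text{Tr}(Q_j^2)=d$, and \eqref{negativity} expresses $N^\infty(\{Q_j\})$ through $\min_j\lambda_{\text{min}}(Q_j)$. I verified both key computations: Cauchy--Schwarz on the top $d-1$ eigenvalues gives $d\mu_1^2-2\mu_1+(1+d-d^2)\le 0$, whose roots are indeed $\frac{1\pm(d-1)\sqrt{d+1}}{d}$ (discriminant $(d-1)^2(d+1)$), and the extreme-point argument gives $d\mu_1^2-2\mu_1-1\ge 0$ with roots $\frac{1\pm\sqrt{d+1}}{d}$, the upper branch correctly excluded by $\mu_1\le 1/d$. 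Both equality analyses force the $(1,d-1)$ spectrum, the recovered affine forms are precisely $Q_j^\mp$ of \eqref{sicframes}, and the orthogonality relation $\text{Tr}(Q_iQ_j)=0$ does collapse to $\text{Tr}(\Pi_i\Pi_j)=\frac{1}{d+1}$, which with $d^2$ elements is the paper's SIC condition. Your structural observation is also the right one, and it explains the asymmetry in the theorem's hypotheses: saturating $N^-$ means the worst $\lambda_{\text{min}}$ attains its \emph{largest} admissible value, which forces every $Q_j$ to be extremal with no covariance assumption, whereas saturating $N^+$ pins down only one element, and group covariance (conjugation of a fiducial by unitaries, hence equality of all spectra) is exactly what propagates the extremal spectrum to the whole orbit. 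Two small points you should make explicit if you write this up fully: in the equality case of the polytope step, you need strict convexity of $\sum_k\mu_k^2$ to conclude the maximizer is an extreme point (plain convexity only says \emph{some} maximizer is extremal); and you should note that the outlier eigenvalue is simple, so the projector $\Pi_j$, and hence the decomposition $Q_j=\alpha\Pi_j+\beta\mathbb{I}$, is uniquely determined.
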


    Zhu's theorem identifies SICs as centrally important to the study of Q-reps and more broadly for quantum theory because the Q-reps which achieve both bounds on ceiling negativity in any dimension are related to SICs by a simple affine transformation. Is the ceiling negativity unique in this way? If so, it would be interesting to understand why. If not, where does it fail?

    We will address this question for the sum negativity. Like ceiling negativity, there is a more manageable expression for the sum negativity of a Q-rep. The following argument is due to Appleby and Zhu.
    \begin{lemma}\label{sumnegmethod}
        An equivalent form of the sum negativity of a Q-rep $\{Q_j\}$ is
\begin{equation}\label{sumneg2}
    N^1(\{Q_j\})=-\frac{1}{d}\min\left\{\lambda_{\{1\}},\lambda_{\{2\}},\ldots,\lambda_{\{2^{d^2}-1\}}\right\},
\end{equation}
where $\lambda_{\{i\}}$ is the minimal eigenvalue of the $i$th partial sum matrix of the $\{Q_j\}$ matrices.
    \end{lemma}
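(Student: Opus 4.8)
The plan is to collapse the maximization over states in the definition \eqref{neg_Q} into a single optimization by first rewriting the sum negativity of a \emph{fixed} state as a minimization over subsets of the index set, and then exchanging that minimization with the maximization over $\rho$. Throughout I would write $M_T := \sum_{j\in T} Q_j$ for the partial sum matrix attached to a subset $T\subseteq\{1,\dots,d^2\}$, so that the numbers $\lambda_{\{i\}}$ appearing in \eqref{sumneg2} are precisely the minimal eigenvalues $\lambda_{\min}(M_T)$ as $T$ ranges over the $2^{d^2}-1$ nonempty subsets.

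First I would fix a state $\rho$ and set $a_j := \mathfrak{q}(j) = \text{Tr}(\rho F_j) = \tfrac{1}{d}\text{Tr}(\rho Q_j)$, using the self-dual relation $F_j = \tfrac{1}{d}Q_j$. By \eqref{npneg}--\eqref{negpart}, the sum negativity $N^1(\rho,\{Q_j\})$ is the negated sum of the negative entries $a_j$. The one elementary observation driving the proof is that the most negative value a subset sum $\sum_{j\in T} a_j$ can take is realized by including exactly the indices with $a_j<0$, so that
\begin{equation}
  N^1(\rho,\{Q_j\}) = -\sum_{j:\,a_j<0} a_j = -\min_{T}\sum_{j\in T} a_j = -\frac{1}{d}\,\min_{T}\text{Tr}(\rho M_T),
\end{equation}
where $T$ ranges over all subsets and the empty set contributes the value $0$. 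Substituting into \eqref{neg_Q} and using $\max_\rho(-g(\rho))=-\min_\rho g(\rho)$ together with the commutativity of the two minima, I would then obtain
\begin{equation}
  N^1(\{Q_j\}) = \max_\rho\Bigl(-\frac{1}{d}\min_T\text{Tr}(\rho M_T)\Bigr) = -\frac{1}{d}\,\min_T\,\min_\rho\text{Tr}(\rho M_T) = -\frac{1}{d}\,\min_T\lambda_{\min}(M_T),
\end{equation}
where the final step uses that a linear functional on the convex set of density matrices attains its minimum at a pure eigenstate, giving $\min_\rho\text{Tr}(\rho M_T)=\lambda_{\min}(M_T)$. To discard the empty subset I would note that an orthogonal IC-POVM cannot exist, so at least one $Q_j$ (a singleton partial sum) has a negative eigenvalue; hence $\min_T\lambda_{\min}(M_T)<0$, the minimizer is nonempty, and \eqref{sumneg2} follows.

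The whole content is this exchange of optima, so there is no single hard technical step; the one point I would flag as deserving care is conceptual rather than computational. A reader might worry that the subset $T$ realizing the smallest eigenvalue need not coincide with the negative support $\{j:a_j<0\}$ of the optimal state, which would seem to leave slack between the ``sum of negative entries'' characterization and the ``most negative partial-sum eigenvalue'' characterization. The subset-sum identity in the first display is exactly what closes this gap: it shows the two descriptions agree for \emph{every} state, so nothing is lost in the passage to the global optimum. Making that identity explicit, and then justifying the commutation of the two minimizations over the finite family of subsets and the compact set of states, is where I would concentrate the write-up.
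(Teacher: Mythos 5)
Your proposal is correct and follows essentially the same route as the paper's proof: express the sum negativity of a fixed state as a partial-sum trace, use that a linear functional over density matrices is minimized at the minimal eigenvector, and then optimize over states and subsets. Your subset-sum identity and explicit exchange of the optimizations merely make rigorous the step the paper treats informally (identifying the optimal state's negative support with the minimizing partial sum), so the content is the same.
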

    \begin{proof}
        For a quasiprobability representation $\mathfrak{q}$ of a state $\rho$ with respect to Q-rep $\{Q_j\}$, if $\mathcal{S}:=\{i|\text{Tr}(\rho Q_i)<0\}$, then
        \begin{equation}\label{equivsumneg}
            N^1(\mathfrak{q})=\left|\frac{1}{d}\sum_{i\in\mathcal{S}}\text{Tr}(\rho Q_i)\right|=\left|\frac{1}{d}\text{Tr}\left(\rho \sum_{i\in\mathcal{S}}Q_i\right)\right|.
        \end{equation}
        As in \eqref{negativity}, the minimal value of the expression $\text{Tr}(\rho F)$ over quantum state space is the minimal eigenvalue of $F$ and the state $\rho$ which minimizes the expression is the corresponding eigenvector. Thus, for a fixed subset $\mathcal{G}$ of the $\{Q_j\}$ matrices, the state which minimizes $\text{Tr}(\rho\sum_{i\in\mathcal{G}}Q_i)$ is the minimal eigenvector of the matrix $\sum_{i\in\mathcal{G}}Q_i$. Via the definition above, for each $\rho$ there is a subset $\mathcal{S}$; in particular, there is a subset $\mathcal{S}'$ for a state whose quasiprobability representation has the sum negativity value $N^1(\{Q_j\})$ and, furthermore, the magnitude of the minimal eigenvalue of the matrix $\sum_{i\in\mathcal{S}'}Q_i$ is equal to $d$ times $N^1(\{Q_j\})$. Thus determining the sum negativity is equivalent to looking for the minimal eigenvalue over all partial sum matrices of $\{Q_j\}$. There are $2^{d^2}-1$ entries to minimize over because we ignore the partial sum corresponding to the empty subset.
\end{proof}
Does Theorem \ref{zhu} extend to sum negativity? In Section \ref{sumnegd3}, we will demonstrate that it generally does not with explicit counterexamples in the first nontrivial dimension, $d=3$.

\section{Sum Negativities of $\{Q_j^+\}$ and $\{Q_j^-\}$}\label{sumnegsic}
Sum negativity is notably harder to work with than ceiling negativity, both analytically and numerically. As such, using the method described in Lemma \ref{sumnegmethod}, analytic results for the sum negativity of $\{Q_j^+\}$ and $\{Q_j^-\}$ have only so far been obtained for dimensions 2, 3, and 4. Numerically exact results have also been obtained for $d=5$.

For $d=2$, sum negativity and ceiling negativity are equivalent measures because a $d=2$ Q-rep vector cannot contain more than one negative element. This property can be proven easily with equation \eqref{innerprodidentity} and the fact quasiprobabilities are normalized. In addition, the ceiling negativities for $\{Q^+\}$ and $\{Q^-\}$ are equivalent. Thus,
\begin{equation}
N^1(\{Q_j^\pm\})=N^\infty(\{Q_j^\pm\})=\frac{\sqrt{3}-1}{4}.
\end{equation}
In fact, all $N^p$ negativities for the SIC Q-reps are equal in dimension $2$. This is a reflection of the fact that all Q-reps are equivalent to the Wootters discrete Wigner function in this dimension \cite{Zhu2016}.

There is a continuous one-parameter family of SICs in dimension $3$ \cite{Appleby13}. For any of them, we may construct the SIC Q-reps. In Zhu's paper we easily see that ceiling negativity is insensitive to the SIC chosen---all that matters is that the defining property of a SIC is satisfied. This turns out to also be true for sum negativity\footnote{We won't explore it further here, but the situation is more interesting. Although the sum negativity is insensitive to the value of the parameter $t$ which defines the inequivalent SIC Q-reps, the quantum states whose quasiprobability representations achieve these sum negativity values \textit{do} depend on the parameter. It turns out that the sum negativity for $\{Q^-_j\}$ constructed from the Hesse SIC ($t=0$ in \cite{Appleby13}) is achieved by a complete set of mutually unbiased bases \cite{Fields89}, that is, $12$ ($=d(d+1)$) vectors which form four orthogonal bases such that any vector from one basis has an equal overlap with any vector from another basis. For all the other SICs in dimension $3$, the states which achieve the sum negativity of $\{Q^-\}$ form a single basis instead. The complete set of mutually unbiased bases also turns out to be the set of states which minimize the Shannon entropy in the Hesse SIC representation \cite{Szymusiak14,Stacey16}.\label{sicsandmubs}}. What's more, the sum negativities of $\{Q^+_j\}$ and $\{Q_j^-\}$ are equal:
\begin{equation}
    N^1(\{Q_j^\pm\})=\frac{1}{3}.
\end{equation}
Recall that Theorem \ref{zhu} establishes that the upper and lower bounds for ceiling negativity over all Q-reps are achieved by the SIC Q-reps. We see now that in dimension $3$, the sum negativities of $\{Q^+_j\}$ and $\{Q_j^-\}$ are equal. This tells us that if any $d=3$ Q-rep has a sum negativity other than $1/3$, the analog of Theorem \ref{zhu} does not hold for sum negativity in dimension 3.

For $d=4$,
 \begin{equation}\label{niced4}
     N^1(\{Q_j^+\})=\frac{1}{2},
 \end{equation}
 and
 \begin{equation}\label{crazyd4}
 \begin{split}
 N^1(\{Q_j^-\})&=-\frac{1}{16}\left(5+\sqrt{5}-2\sqrt{2\left(1+\sqrt{5}\right)}-2\sqrt{23-2\sqrt{5}+2\sqrt{-22+10\sqrt{5}}}\right)\\
 &\approx 0.420967.
\end{split}
\end{equation}
Value \eqref{niced4} is surprisingly nice. We will comment briefly on this in Section \ref{further}.  Value \eqref{crazyd4} is shocking. However, in light of recent results relating the SIC problem to algebraic number theory (\cite{RCF-SIC}, see the contributions of Appleby \textit{et. al} \cite{Appleby17} and Bengtsson \cite{Bengtsson16} to this volume for a review), it is worth mentioning a few possibly relevant facts about this number and how it arose.

The sum negativity for the $d=4$ SIC Q-rep is ($1/4$ times) the minimal eigenvalue of certain $7$-element partial sums of the SIC Q-rep matrices. The characteristic polynomial which has this eigenvalue as a root is:
\begin{equation}\label{charpoly}
    -\frac{1293}{32} + \frac{293\sqrt{5}}{32} - \sqrt{5 \left(22 + 29 \sqrt{5}\right)} + \left(\frac{129}{8} - \frac{35 \sqrt{5}}{8} + \sqrt{2\left(31 + 17 \sqrt{5}\right)}\right) x + \frac{21}{2} x^2 - 7 x^3 + x^4.
\end{equation}
The coefficients in equation \eqref{charpoly} are members of the field $\mathbb{Q}\left(\sqrt{5} +\sqrt{362 + 313 \sqrt{5}}\right)$. The factor multiplying $-1/16$ in \eqref{crazyd4} is an algebraic integer, but not an algebraic unit. Finally, the minimal polynomial for \eqref{crazyd4} is degree $8$.

For $d=5$, $N^1(\{Q^+\})\approx 0.584277$ and $N^1(\{Q^-\})\approx 0.501957$. These answers are numerically correct, but do not lend themselves readily to conversion to exact values.

\section{Weyl-Heisenberg Q-reps in $d=3$}\label{whnqpr}
We say that a set of vectors is \textit{group covariant} if it is the orbit of some group action on an initial vector, which we call the \textit{fiducial}. All known SICs are group covariant and all but one of those are covariant with respect to the \textit{Weyl-Heisenberg (WH) group}. In dimension $d$, let $\omega_d=e^{2\pi i/d}$ be a $d$th root of unity, and define the shift and phase operators
\begin{equation}
    X\ket{j}=\ket{j+1} \text{, } Z\ket{j}=\omega_d^j\ket{j},
\end{equation}
where the shift is modulo $d$. Products of powers of $X$ and $Z$ and powers of $\omega_d$ define the WH group. The order of the WH group in dimension $d$ is $d^3$, but for the purposes of constructing a measurement operator or Q-rep, we can neglect the phase factors. In other words, for some fiducial matrix $Q_0$, the other matrices in the orbit would take the form
\begin{equation}
    Q_{ij}=X^iZ^jQ_0(X^iZ^j)^\dag.
\end{equation}
The only known exception to WH covariance for SICs is the Hoggar SIC in dimension 8, but even this outlier is group covariant with respect to the tensor product of three $d=2$ WH groups. What about Q-reps? By their construction, the non-Hoggar SIC Q-reps are WH covariant and these Q-reps achieve the bounds for ceiling negativity in all dimensions (provided a SIC exists in that dimension). Thus, if we were to consider any subset of the full space of Q-reps for computational study, the set of
WH covariant Q-reps (WH Q-reps) is likely the best starting point. In any case, due to the ubiquity of the WH group in quantum information theory, the bounds of sum negativity within WH Q-reps may be of independent interest.

A Q-rep is associated with an orthogonal basis of operators $\{Q_j\}$ with norm $3$, that is,
\begin{equation}\label{ortho3}
    \text{Tr}(Q_iQ_j)=3 \delta_{ij}.
\end{equation}
Therefore, the general conditions for a WH Q-rep may be obtained by requiring that the elements of the WH orbit of a general unit-trace Hermitian matrix
\begin{equation}
\begin{bmatrix}\label{complexmat}
z & y& x\\
y^*&w&v\\
x^*&v^*&1-z-w
\end{bmatrix}
\end{equation}
satisfy equation \eqref{ortho3}. Imposing this condition results in a number of equations which can be algebraically simplified to the following three:

\begin{equation}\label{complexgenconds}
\begin{split}
z^2+zw+w^2&=z+w,\\
|y|^2+|x|^2+|v|^2&=1, \\
xy+y^*v+v^*x^*&=0.
\end{split}
\end{equation}
In terms of real variables, an arbitrary unit-trace Hermitian matrix
\begin{equation}
\begin{bmatrix}
a & b+ic& d+ie\\
b-ic&f&g+ih\\
d-ie&g-ih&1-a-f
\end{bmatrix}
\end{equation}
is a WH Q-rep fiducial if
\begin{equation}\label{realgenconds}
\begin{split}
a^2+af+f^2&=a+f,\\
b^2+c^2+d^2+e^2+g^2+h^2&=1, \\
dg+bd+bg+ch&=ec+eh, \\
cd+be+bh&=cg+dh+eg.
\end{split}
\end{equation}

From \eqref{complexgenconds} or \eqref{realgenconds} we can see that $d=3$ WH Q-rep fiducials define a $4$ dimensional subspace of the $8$ dimensional space of $3\times3$ unit-trace Hermitian matrices. The main diagonal elements are independent of the off-diagonal elements and satisfy the equation of an ellipse (the first equation in \eqref{complexgenconds} or \eqref{realgenconds}). The magnitude of the off-diagonal elements in \eqref{complexmat} lie on the unit 2-sphere, but their exact values only lie at points where the expression $xy+y^*v+v^*x^*$ vanishes.

The following two matrices are explicit examples of $d=3$ \textit{Non-SIC} WH Q-rep fiducials:
\begin{equation}\label{minsumneg}
    Q^{\text{min}}=
\begin{bmatrix}
    0 & -\frac{1}{3}+\frac{i}{3} & \frac{2}{3}-\frac{i}{3} \\
   -\frac{1}{3}-\frac{i}{3}  &1&-\frac{1}{3}+\frac{i}{3}\\
   \frac{2}{3}+\frac{i}{3} & -\frac{1}{3}-\frac{i}{3}&0
\end{bmatrix},
\end{equation}
\begin{equation}\label{maxsumneg}
    Q^{\text{max}}=
\begin{bmatrix}
    0 & \left(-\frac{1}{3}+\frac{\sqrt{7}}{12}\right)+\frac{i}{4} & \left(\frac{2}{3}+\frac{\sqrt{7}}{12}\right)-\frac{i}{4}\\
   \left(-\frac{1}{3}+\frac{\sqrt{7}}{12}\right)-\frac{i}{4} &1&\left(-\frac{1}{3}+\frac{\sqrt{7}}{12}\right)+\frac{i}{4}\\
  \left(\frac{2}{3}+\frac{\sqrt{7}}{12}\right)+\frac{i}{4}& \left(-\frac{1}{3}+\frac{\sqrt{7}}{12}\right)-\frac{i}{4} &0
\end{bmatrix}
.
\end{equation}
In the following section we will see why they are presented with the designations ``max'' and ``min''.

\section{Sum Negativity Bounds in $d=3$}\label{sumnegd3}
The WH Q-reps generated by \eqref{minsumneg} and \eqref{maxsumneg} will be denoted $\{Q^\text{min}_j\}$ and $\{Q^\text{max}_j\}$ respectively. Their sum negativities are
\begin{equation}\label{minval}
    N^1(\{Q^\text{min}_j\})=\frac{1}{3}\left(2\cos\frac{\pi}{9}-1\right)\approx 0.293128,
\end{equation}
and
\begin{equation}\label{maxval}
N^1(\{Q^\text{max}_j\})=\frac{2}{9}\left(\sqrt{7}-1\right)\approx 0.365723.
\end{equation}
Recall from Section \ref{sumnegsic} that the sum negativities of both SIC Q-reps is $1/3$, so $\{Q^\text{min}_j\}$ and $\{Q^\text{max}_j\}$ are explicit counterexamples to Zhu's theorem for sum negativity. It turns out that \eqref{maxval} is a strict upper bound on the sum negativity, not only of WH Q-reps, but of all Q-reps in dimension 3. We start with two lemmas.
\begin{lemma}\label{stationary}
    For quasiprobability vectors with $d^2$ elements lying in the sphere of radius $\sqrt{1/d}$, the stationary points for sum negativity are:
\begin{enumerate}
    \item Those vectors whose entries consist only of two distinct values.
    \item Those vectors whose entries consist only of three distinct values including zero.
\end{enumerate}
\end{lemma}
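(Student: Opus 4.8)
The plan is to read sum negativity as a constrained optimization problem and apply a (suitably nonsmooth) Lagrange-multiplier analysis. Writing $n=d^2$ and discarding the harmless factor $1/d$, I take the objective to be $f(\mathfrak{q})=\sum_j\mathfrak{q}^{(-)}(j)=\sum_j\max\{0,-\mathfrak{q}(j)\}$, restricted to the manifold $M$ carved out by the two defining constraints of the setting: normalization $g_1(\mathfrak{q})=\sum_j\mathfrak{q}(j)-1=0$ and the sphere $g_2(\mathfrak{q})=\sum_j\mathfrak{q}(j)^2-1/d=0$. The structural fact driving everything is that $f$ is convex and piecewise linear: the coordinate hyperplanes $\{\mathfrak{q}(j)=0\}$ partition $M$ into finitely many regions, and on the interior of each region the negative-index set $\mathcal{S}=\{j:\mathfrak{q}(j)<0\}$ is fixed, so $f=-\sum_{j\in\mathcal{S}}\mathfrak{q}(j)$ is linear with $\partial_j f=-1$ for $j\in\mathcal{S}$ and $\partial_j f=0$ otherwise.

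First I would dispose of the interior stationary points, those at which no entry vanishes. There $f$ is smooth and ordinary Lagrange multipliers apply: stationarity reads $\partial_j f=\mu+2\nu\,\mathfrak{q}(j)$ for every $j$, with $\mu,\nu$ the multipliers attached to $g_1,g_2$. Reading this componentwise, every positive entry must equal $-\mu/(2\nu)$ and every negative entry must equal $(-1-\mu)/(2\nu)$, and $\nu=0$ is seen to be impossible once both signs are present. Hence such a point carries only two distinct values, which is case~1.

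Next I would treat the boundary points, where a nonempty set $\mathcal{Z}$ of entries is exactly zero and $f$ has kinks. The idea is to replace $\nabla f$ by the subdifferential: since $\max\{0,-t\}$ has subdifferential $[-1,0]$ at $t=0$, the generalized first-order condition becomes $s_j=\mu+2\nu\,\mathfrak{q}(j)$ with $s_j=0$ on the positive block, $s_j=-1$ on the negative block, and $s_j\in[-1,0]$ on $\mathcal{Z}$. The positive and negative blocks again collapse to single values $a>0$ and $b<0$, while $\mathcal{Z}$ contributes the third value $0$; solving the two block equations gives $\nu=1/(2(a-b))>0$ and $\mu=-a/(a-b)$, and one checks directly that $a>0>b$ forces $\mu\in(-1,0)$, so the subgradient membership on $\mathcal{Z}$ is automatically consistent. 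This yields exactly the three-value configuration of case~2 and shows that no stationary point can carry more than two distinct \emph{nonzero} values.

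The step I expect to be the main obstacle is making the boundary analysis rigorous rather than merely formal: one must justify that the subdifferential condition above really is the correct first-order optimality condition for the piecewise-linear $f$ restricted to $M$, i.e.\ that it selects precisely the points at which every feasible tangent direction fails to change $f$ to first order. Concretely this is a one-sided directional-derivative computation along $T_{\mathfrak{q}}M=\{v:\sum_j v_j=0,\ \sum_j\mathfrak{q}(j)\,v_j=0\}$, verifying that perturbing the zero entries in either sign while redistributing mass between the two nonzero levels, subject to both constraints, can neither raise nor lower $f$ to first order; the admissible interval $[-1,0]$ for the zero-block multiplier is exactly what encodes this. Once the correspondence is in place the classification is immediate, and the degenerate strata (the flat region where no entry is negative, on which $f\equiv 0$ is the global minimum, together with the forbidden $\nu=0$ cases) are set aside as carrying no nontrivial negativity.
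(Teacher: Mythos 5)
Your classification is correct, and the skeleton of your argument is the one the paper uses: two equality constraints (normalization and the sphere), a componentwise Lagrange condition, and the observation that this condition forces all entries onto at most two nonzero levels plus possibly zero. Where you genuinely diverge is in how the kink of the absolute value is handled. The paper regularizes: it replaces $|x|$ by $\sqrt{x^2+c}$, differentiates the smoothed Lagrangian to get $\mathfrak{p}(j)/\sqrt{\mathfrak{p}(j)^2+c}-\lambda-2\mu\,\mathfrak{p}(j)=0$, and lets $c\to 0$, whereupon the solutions are exactly $(\pm 1-\lambda)/(2\mu)$ and $0$. You instead keep the piecewise-linear objective and use the convex subdifferential $[-1,0]$ of $\max\{0,-t\}$ at $t=0$ as the first-order condition on the zero block. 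Each route buys something. The smoothing keeps the whole analysis inside classical Lagrange calculus and, importantly, is reused verbatim in the paper's next lemma, where the bordered Hessian of the same smoothed Lagrangian (whose diagonal collapses to $-2\mu$ as $c\to 0$) yields the local-maximum certificate; your nonsmooth formulation would require a separate second-order or region-by-region argument there. Conversely, your route avoids the informal interchange of limits, and your consistency check $\mu=-a/(a-b)\in(-1,0)$ proves the converse inclusion --- that the two-value and three-value configurations really \emph{are} stationary --- which the paper never spells out even though Theorem \ref{extremenqpr} tacitly relies on it. You are also slightly more careful than the paper about the degenerate flat region: an all-positive vector on the sphere with many distinct values is stationary (the objective vanishes identically nearby) and so is technically missed by the lemma's classification; you explicitly quarantine this stratum as the global minimum, which is exactly how Lemma \ref{extremelemma} disposes of it. The one obligation your approach carries, which you correctly flag, is justifying that the subdifferential condition along $T_{\mathfrak{q}}M$ is the right notion of stationarity for the constrained nonsmooth problem; once that standard fact from convex analysis is cited or proved, your argument is complete and equivalent in strength to the paper's.
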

\begin{proof}

Quasiprobability vectors in the sphere of radius $\sqrt{1/d}$ satisfy the following constraints:
\begin{equation}\label{constraints}
\sum_{j=1}^{d^2}\mathfrak{p}(j)=1 \quad \text{and} \quad \sum_{j=1}^{d^2}\mathfrak{p}(j)^2=\frac{1}{d}.
\end{equation}
The definition of the sum negativity of a quasiprobability vector \eqref{sumneg} gives us
\begin{equation}\label{sumneg}
    N^1(\mathfrak{p})=|\!|\mathfrak{p}^{(-)}|\!|_1=\sum_{j=1}^{d^2}\frac{|\mathfrak{p}(j)|-\mathfrak{p}(j)}{2}=\frac{1}{2}\sum_{j=1}^{d^2}|\mathfrak{p}(j)|-\frac{1}{2}.
\end{equation}
From this it is clear that the stationary points of the sum negativity are exactly the stationary points of the sum of the absolute values. Absolute values are often difficult to deal with in optimization problems, but it turns out that the function $|x|$ may be approximated efficiently by $\sqrt{x^2+c}$ where $c$ is taken to zero after any differentiation \cite{Argaez14}. Thus, we want to extremize the function
\begin{equation}\label{sumnegdiff}
\sum_{j=1}^{d^2}\sqrt{\mathfrak{p}(j)^2+c}
\end{equation}
subject to constraints \eqref{constraints} in the small $c$ limit. To do this we construct a Lagrangian
\begin{equation}
    \mathcal{L}(\mathfrak{p},\lambda,\mu)=\sum_{j=1}^{d^2}\left(\sqrt{\mathfrak{p}(j)^2+c}\right)-\lambda\left(\sum_{j=1}^{d^2}\mathfrak{p}(j)-1\right)-\mu\left(\sum_{j=1}^{d^2}\mathfrak{p}(j)^2-\frac{1}{d}\right)
\end{equation}
where $\lambda$ and $\mu$ are Lagrange multipliers. Varying this Lagrangian, we see that the stationary points must satisfy
\begin{equation}
\frac{\mathfrak{p}(j)}{\sqrt{\mathfrak{p}(j)^2+c}}-\lambda-2\mu \mathfrak{p}(j)=0
\end{equation}
for all $j$. This expression has $4$ solutions, two of which become zero in the $c=0$ limit:
\begin{equation}\label{entries}
        \mathfrak{p}(j)=\frac{\pm 1-\lambda}{2\mu}\quad\text{or}\quad \mathfrak{p}(j)=0.
\end{equation}
    Consider the case where $\mathfrak{p}(j)\neq 0$ for all $j$. Now, in order for the constraints \eqref{constraints} to hold, some number $n$ of the entries are $\frac{-1-\lambda}{2\mu}$ and the other $d^2-n$ are $\frac{1-\lambda}{2\mu}$. That is,
\begin{equation}\label{solveformultipliers}
    n\left(\frac{-1-\lambda}{2\mu}\right)+(d^2-n)\left(\frac{1-\lambda}{2\mu}\right)=1 \quad \text{and} \quad n\left(\frac{-1-\lambda}{2\mu}\right)^2+(d^2-n)\left(\frac{1-\lambda}{2\mu}\right)^2=\frac{1}{d}.
\end{equation}
It is easy to verify that $0<n<d^2$. In this case the quasiprobability vector $\mathfrak{p}$ consists of two distinct values. If $\mathfrak{p}(j)=0$ for $m$ of the indices then $d^2-n-m$ of the entries are $\frac{1-\lambda}{2\mu}$ and the appropriately modified form of \eqref{solveformultipliers} holds in which case the quasiprobability vector $\mathfrak{p}$ consists of three distinct values including zero.
\end{proof}
A stationary point can be a local maximum, a local minimum, or a saddle point. Sufficient conditions for maxima and minima in Lagrangian systems with equality constraints are known. We will need the following tool (which can be found in chapter 2 of \cite{Pike86}):

\begin{thm}[Sufficient Conditions for Constrained Maxima]
    Consider a constrained maximization problem for a twice-differentiable function of $n$ variables $y(\mathbf{x})$ with $m$ twice-differentiable equality constraints $f_i(\mathbf{x})=0$, $i=1,\ldots,m$. The Lagrangian is
    \begin{equation}\label{lagrangian}
\mathcal{L}(\mathbf{x},\boldsymbol{\lambda})=y(\mathbf{x})-\sum_{i=1}^m \lambda_i f_i(\mathbf{x}),
\end{equation}
    where $\lambda_i$ are Lagrange multipliers. If there exist vectors $\mathbf{x}^*$ and $\boldsymbol\lambda^*$ such that
$\partial_i\mathcal{L}(\mathbf{x}^*,\boldsymbol{\lambda}^*)=0$, $i=1, \ldots,n$ and $f_i(\mathbf{x}^*)=0$, $i=1,\ldots,m$ and if
\begin{equation}\label{bddhess}
(-1)^s
\begin{vmatrix}
    \partial_{11}\mathcal{L} & \cdots & \partial_{1s}\mathcal{L} & \partial_1f_{1} & \cdots & \partial_1f_{m}\\
\vdots & & \vdots& \vdots & & \vdots \\
    \partial_{1s}\mathcal{L} & \cdots & \partial_{ss}\mathcal{L} & \partial_sf_{1}& \cdots & \partial_sf_{m}\\
\partial_1f_{1} & \cdots & \partial_sf_{1} & 0 & \cdots & 0\\
\vdots & & \vdots & \vdots & & \vdots \\
\partial_1f_{m} & \cdots & \partial_sf_{m}& 0 &\cdots & 0
\end{vmatrix}
>0
\end{equation}
    for $s=m+1,\ldots, n$ ($\partial_{ij}\mathcal{L}$ indicates a second partial derivative of $\mathcal{L}(\mathbf{x},\boldsymbol{\lambda})$ with respect to $x_i$ and $x_j$ evaluated at $\mathbf{x}^*$ and $\boldsymbol{\lambda}^*$ and $\partial_if_{j}$ indicates the first partial derivative of constraint function $f_j(\mathbf{x})$ with respect to $x_i$ evaluated at $\mathbf{x}^*$), then $y(\mathbf{x})$ has a strict local maximum at $\mathbf{x}^*$.
\end{thm}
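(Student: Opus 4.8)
The plan is to prove this in two stages: first reduce the constrained statement to an ordinary (unconstrained) second-order condition on the constraint surface, and then show that the bordered-determinant sign pattern is exactly the linear-algebraic criterion for the relevant Hessian to be negative definite on the tangent space. Throughout I assume the standard constraint qualification that the gradients $\nabla f_i(\mathbf{x}^*)$ are linearly independent, so that the $m\times n$ matrix $B$ with entries $\partial_j f_i(\mathbf{x}^*)$ has full rank $m$; this is implicit in the non-degeneracy of the bordered determinants.

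\emph{Stage 1 (reduction via the tangent space).} By the implicit function theorem the constraint set $M=\{f_i=0\}$ is, near $\mathbf{x}^*$, a smooth $(n-m)$-dimensional manifold, which I parametrize as $\mathbf{x}(\mathbf{u})$ with $\mathbf{x}(\mathbf{0})=\mathbf{x}^*$ and with the columns of the Jacobian $J=\partial\mathbf{x}/\partial\mathbf{u}$ spanning the tangent space $T=\ker B$. Since $f_i\equiv 0$ on $M$, the Lagrangian and the objective agree there, $\mathcal{L}(\mathbf{x}(\mathbf{u}))=y(\mathbf{x}(\mathbf{u}))$. The stationarity condition $\partial_i\mathcal{L}(\mathbf{x}^*)=0$ says $\nabla y$ is a combination of the $\nabla f_i$, hence $\nabla(y|_M)(\mathbf{0})=0$; differentiating $\mathcal{L}(\mathbf{x}(\mathbf{u}))$ twice and using $\partial_i\mathcal{L}=0$ to annihilate the terms containing $\partial^2 x_i/\partial u_a\partial u_b$, the Hessian of $y|_M$ at $\mathbf{0}$ equals $J^{T}HJ$, where $H=[\partial_{ij}\mathcal{L}]$. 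By the ordinary second-order sufficient condition, $\mathbf{x}^*$ is a strict local maximum of $y$ on $M$ as soon as $J^{T}HJ$ is negative definite, i.e.\ as soon as the form $\mathbf{h}^{T}H\mathbf{h}$ is negative definite on $T=\ker B$. Everything thus reduces to detecting this restricted negative-definiteness.

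\emph{Stage 2 (the bordered-Hessian criterion, the crux).} Here I must show that ``$(-1)^s\det(\cdots)>0$ for $s=m+1,\ldots,n$'' is precisely the condition that $\mathbf{h}^{T}H\mathbf{h}$ be negative definite on $\ker B$. Reordering coordinates (harmless, since $B$ has rank $m$) so that the leading $m\times m$ block of $B$ is invertible, I solve $B\mathbf{h}=0$ for the first $m$ components of $\mathbf{h}$ in terms of the remaining $\mathbf{h}'=(h_{m+1},\ldots,h_n)$; substituting back expresses the form on $\ker B$ as $\mathbf{h}'^{T}\tilde H\mathbf{h}'$ for a symmetric $(n-m)\times(n-m)$ reduced Hessian $\tilde H$. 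By Sylvester's criterion, $\tilde H$ is negative definite iff its leading principal minors alternate, $(-1)^{k}\det\tilde H_k>0$ for $k=1,\ldots,n-m$. The remaining task is a determinant identity: the bordered determinant in the theorem for $s=m+k$ equals $\det\tilde H_k$ times $(-1)^{m}(\det B_{[m]})^{2}$, where $B_{[m]}$ is the invertible leading block. This is a Schur-complement computation; the nesting is consistent because the first $m$ columns furnishing the invertible block survive in every $s\ge m$, so the same variables are eliminated at each level. Matching signs gives $(-1)^{m+k}\det(\cdots)>0\iff(-1)^{k}\det\tilde H_k>0$, so the theorem's hypothesis over all $s$ is equivalent to $\tilde H$ being negative definite, completing the argument.

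I expect the Schur-complement determinant identity in Stage 2 to be the delicate part: one must track the $(-1)^m$ and the squared pivot minor carefully, and verify that the nested bordered minors really do correspond to the nested reduced minors under a single consistent elimination of the first $m$ variables. Stage 1 is routine once the constraint qualification is granted; the only point to watch there is that \emph{strict} definiteness (hence the strict inequalities in the hypothesis) is genuinely needed to conclude a \emph{strict} local maximum rather than a merely non-strict one.
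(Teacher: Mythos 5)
First, a point of reference: the paper does not prove this theorem at all --- it is imported as a known tool, with a citation to chapter 2 of \cite{Pike86} --- so your attempt is being measured against the standard argument rather than anything in the text. Your Stage 1 is correct and is the standard reduction: on the constraint manifold $\mathcal{L}$ and $y$ coincide, full stationarity of $\mathcal{L}$ annihilates the terms containing $\partial^2 x_i/\partial u_a\partial u_b$, so the Hessian of $y|_M$ at $\mathbf{x}^*$ is $J^T H J$, and negative definiteness of $H$ on $\ker B$ gives a strict local maximum. Your Stage 2 identity is also correct as far as it goes: when the leading $m\times m$ block $B_{[m]}$ is invertible, the bordered determinant at level $s=m+k$ equals $(-1)^m(\det B_{[m]})^2\det\tilde H_k$, and your sign bookkeeping matching $(-1)^{m+k}$ against Sylvester's $(-1)^k$ is right. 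You are also right that full row rank of $B$ is forced by the hypothesis (a left null vector $v$ of $B_s^T$ would make $(0,v)$ a null vector of the bordered matrix, killing its determinant).

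The genuine gap is the phrase ``reordering coordinates (harmless, since $B$ has rank $m$).'' It is not harmless. The hypothesis \eqref{bddhess} is a family of sign conditions on \emph{nested leading} minors taken in the \emph{given} variable ordering, and that family is not invariant under permutations of the variables: permuting coordinates changes which bordered submatrices appear. Moreover, full row rank of $B$ does not make $B_{[m]}$ invertible --- take $m=1$, $B=(0,1,0)$, where one checks directly that $\det M_2=-h_{11}$ and $\det M_3=-(h_{11}h_{33}-h_{13}^2)$, so the hypothesis can perfectly well hold in this ordering (e.g.\ $h_{11}=h_{33}=-1$, $h_{13}=0$) even though your elimination of the first variable cannot be performed. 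After reordering, the sign conditions you would need are conditions on \emph{different} minors, and by your own Stage 2 those reordered conditions are equivalent to negative definiteness of $H$ on $\ker B$ --- which is exactly what you are trying to prove, so invoking them is circular. As written, your argument establishes the theorem only under the unstated extra hypothesis $\det B_{[m]}\neq 0$.

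The gap is fixable by making Stage 2 ordering-free, leaving Stage 1 untouched. Since each truncated Jacobian $B_s$ has full row rank $m$ whenever $\det M_s\neq0$, we have $\dim\ker B_s=s-m$, and an orthogonal change of basis of $\mathbb{R}^s$ adapted to $\ker B_s$ gives the ordering-free identity $\det M_s=(-1)^m c_s\det\left(K_s^T H_s K_s\right)$ with $c_s>0$ and $K_s$ any basis matrix of $\ker B_s$. The hypothesis then reads $(-1)^{s-m}\det(H|_{V_s})>0$ along the flag $V_{m+1}\subset V_{m+2}\subset\cdots\subset V_n=\ker B$, where $V_s=\ker B_s\oplus\{0\}\subseteq\mathbb{R}^n$. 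Sylvester's criterion holds along any such flag by the same interlacing induction as in the classical case: if $H|_{V_s}$ is negative definite and $(-1)^{s+1-m}\det(H|_{V_{s+1}})>0$, then $H|_{V_{s+1}}$ is nondegenerate with at least $s-m$ negative eigenvalues, and the sign of its determinant forces the remaining eigenvalue to be negative as well. This yields negative definiteness of $H$ on $\ker B$ in the stated generality, and your Stage 1 finishes the proof.
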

A consequence of this is the following:
\begin{lemma}\label{extremelemma}
The stationary points for sum negativity of quasiprobability vectors with $d^2$ elements lying in the sphere of radius $\sqrt{1/d}$ are all local maxima or global minima.
\end{lemma}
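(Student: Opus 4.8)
The plan is to feed the two families of stationary points characterized in Lemma \ref{stationary} into the bordered-Hessian criterion stated above, read off from the signs of the bordered principal minors whether each stationary point is a strict local maximum or a strict local minimum, and then upgrade every local minimum to a global one by a short separate argument. Throughout I would work with the smoothed objective $\sum_j\sqrt{\mathfrak{p}(j)^2+c}$ from the proof of Lemma \ref{stationary}, whose Lagrangian has a \emph{diagonal} Hessian with entries $\partial_{jj}\mathcal{L}=\frac{c}{(\mathfrak{p}(j)^2+c)^{3/2}}-2\mu$; in the limit $c\to0$ this tends to $-2\mu$ on every nonzero entry and to $+\infty$ on every zero entry.

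First I would evaluate the bordered minors in closed form. Writing $D_s=\mathrm{diag}(\partial_{11}\mathcal{L},\dots,\partial_{ss}\mathcal{L})$ and collecting the two constraint gradients from \eqref{constraints} into $B_s=[\mathbf{1}\mid 2\mathfrak{p}]$, the Schur-complement identity gives $\det H_s=\det(D_s)\,\det\!\big(B_s^{\top}D_s^{-1}B_s\big)$, and because the bottom-right block is $2\times2$ the reduced factor equals $\frac{1}{\mu^{2}}\big(k\,P_2-P_1^{\,2}\big)$, where $k$ is the number of nonzero entries among the first $s$ coordinates and $P_1,P_2$ are the partial sums of $\mathfrak{p}(j)$ and $\mathfrak{p}(j)^2$ over those entries. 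The decisive simplification is that all nonzero diagonal entries equal the \emph{same} number $-2\mu$, so this factor is a genuine Cauchy--Schwarz expression, hence nonnegative and (after a harmless reordering putting both values among the first two coordinates) strictly positive for $s\ge 3$. Therefore $\mathrm{sign}\,\det H_s=(-1)^{k}\,\mathrm{sign}(\mu)^{k}$, i.e. the product of the signs of the diagonal of $D_s$.

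For a stationary point whose entries are all nonzero (case~1 of Lemma \ref{stationary}) one has $k=s$, so $\mathrm{sign}\,\det H_s=(-1)^{s}\mathrm{sign}(\mu)^{s}$. Here the sign-stationarity $\mathrm{sign}(\mathfrak{p}(j))=\lambda+2\mu\mathfrak{p}(j)$ forces every positive entry to equal $b=\frac{1-\lambda}{2\mu}$ and every negative entry to equal $a=\frac{-1-\lambda}{2\mu}$; since the two values are distinct they must have opposite signs, so $b>0>a$ and $\mu=1/(b-a)>0$. Then $(-1)^{s}\det H_s>0$ for all $s$, which is exactly the stated sufficient condition for a strict local maximum. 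The same bookkeeping disposes of the minima: if $\mu<0$ the two equations for $a$ and $b$ cannot hold simultaneously (they would require $\lambda<-1$ and $\lambda>1$), so $\mathfrak{p}$ has entries of a single sign, and being normalized to sum $1$ it is nonnegative; its sum negativity is then $0$, and since $N^1\ge0$ everywhere by \eqref{sumneg} this is a global minimum. This settles the local-maximum branch and pins every $\mu<0$ stationary point to the global floor.

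The step I expect to be the genuine obstacle is case~2 of Lemma \ref{stationary}, where zero entries are present. These produce the $+\infty$ diagonal terms, which inject extra positive factors $\prod_{\text{zeros}}\partial_{jj}\mathcal{L}>0$ into $\det(D_s)$ and push $k$ below $s$, so the clean alternation $(-1)^{s}$ is lost and the bordered-Hessian test no longer matches the maximum pattern verbatim. My approach would be to order the zero coordinates last, so that the minors with $s$ at most the number $s_0$ of nonzero entries reproduce the case~1 analysis, while for $s>s_0$ the sign stabilizes to $(-1)^{s_0}\mathrm{sign}(\mu)^{s_0}$; I would then isolate the subcase with no negative entries (again a global minimum with $N^1=0$) and, in the remaining subcase, examine the one-sided directional derivatives, using that opening up any zero entry strictly \emph{increases} $\sum_j|\mathfrak{p}(j)|$ (the $+\infty$ curvature), so the point is a minimum in every such direction. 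The delicate point—and the crux of the whole lemma—is verifying that this dichotomy is exhaustive, i.e. that a mixed-sign, zero-containing stationary point is driven to the global floor rather than surviving as a spurious interior extremum; this is precisely where the argument must do real work beyond the determinant sign count.
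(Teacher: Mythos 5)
Your treatment of case 1 of Lemma \ref{stationary} (all entries nonzero) and of the single-signed stationary points is correct and is, nearly line for line, the paper's own proof: the same smoothed Lagrangian with Hessian $-2\mu$ on nonzero entries, the same block-determinant (Schur complement) reduction of the bordered minors, and the same strict Cauchy--Schwarz positivity $s\sum_i \mathfrak{p}(i)^2-\bigl(\sum_i\mathfrak{p}(i)\bigr)^2>0$, enforced by an ordering convention guaranteeing both values appear among the leading coordinates (the paper fixes $\mathfrak{p}(1)=a$ and forbids $\mathfrak{p}(2)=\mathfrak{p}(3)=a$; yours is equivalent). Single-signed stationary vectors have $N^1=0$, the global minimum, exactly as you say.

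The obstacle you flag in case 2 is genuine, but you should not try to close it: it cannot be closed, and your own observations already contain the reason why. The paper handles case 2 by asserting the computation ``carries through in the same way,'' i.e.\ it silently keeps $-2\mu$ on the Hessian diagonal at the zero entries; as you correctly compute, the smoothed curvature there diverges (the ``zero'' entries of the stationary family sit at $\mathfrak{p}(j)\sim t\sqrt{c}$, giving diagonal entries of order $c^{-1/2}$), and in any case the bordered-Hessian sufficiency theorem requires twice-differentiability, which $|x|$ lacks at the origin. Now push your remark that opening a zero strictly increases $\sum_j|\mathfrak{p}(j)|$ to a first-order computation tangent to the constraints \eqref{constraints}: for a feasible direction $\delta$ with $\sum_j\delta_j=0$ and $\sum_j\mathfrak{p}(j)\delta_j=0$, stationarity of the nonzero entries gives $\mathrm{sign}(\mathfrak{p}(j))=\lambda+2\mu\mathfrak{p}(j)$, hence $\sum_{\mathfrak{p}(j)\neq0}\mathrm{sign}(\mathfrak{p}(j))\,\delta_j=-\lambda\sum_{\mathfrak{p}(j)=0}\delta_j$, and therefore $2\,\Delta N^1=-\lambda t\sum_{\mathfrak{p}(j)=0}\delta_j+|t|\sum_{\mathfrak{p}(j)=0}|\delta_j|+O(t^2)\geq(1-|\lambda|)\,|t|\sum_{\mathfrak{p}(j)=0}|\delta_j|+O(t^2)$, which is strictly positive because mixed signs force $|\lambda|=|a+b|/(b-a)<1$. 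So $N^1$ strictly \emph{increases} in every feasible direction that opens a zero, while by your case-1 computation it decreases in zero-free directions: a mixed-sign stationary point containing zeros is a saddle, neither a local maximum nor a global minimum (its sum negativity is positive). A concrete instance in $d=3$ is $(a,0,b,b,b,b,b,b,b)$ with $b=(21+\sqrt{105})/168$ and $a=1-7b<0$. Thus the dichotomy you were trying to verify is false, and the paper's own proof of case 2 is precisely where the error lives. The damage is contained, however: Theorem \ref{exactupperbound}, the $d=4$ upper bound, and the Hoggar discussion all invoke achieving vectors with no zero entries, so everything downstream survives if Lemma \ref{extremelemma} and Theorem \ref{extremenqpr} are restricted to the two-value case --- which is exactly the case your proposal proves.
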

\begin{proof}
    From Lemma \ref{stationary} we know that the stationary points for sum negativity of quasiprobabilities lying on the sphere of radius $\sqrt{1/d}$ are those consisting of exactly two distinct elements or those consisting of three if one of them is zero. If there are two distinct elements, they can either both be nonnegative or of opposite signs. If they are both nonnegative, the sum negativity is zero which is the global minimum sum negativity value. If there are three distinct elements and
    they are all nonnegative, then it is also a global minimum for sum negativity. We will now show that when the nonzero elements of a stationary point are of opposite signs, it is a local maximum.

    Consider first the stationary points with two distinct elements. In terms of the Lagrange multipliers from the proof of Lemma \ref{stationary}, the stationary quasiprobability vectors are comprised of opposite signed values when $\mu\neq0$ and $-1<\lambda<1$. Without loss of generality we choose $\frac{-1-\lambda}{2\mu}$ to be the negative value so that we may enumerate the number of negative entries with index $n$ as in \eqref{solveformultipliers}. This amounts to the further restriction $\mu>0$. Note that by substituting $\frac{-1-\lambda}{2\mu}=a$ and
    $\frac{1-\lambda}{2\mu}=b$ in \eqref{solveformultipliers}, we can solve for the positive and negative entries in the quasiprobability vector in terms of the dimension $d$ and number of negative entries $1\leq n<d(d-1)$:
\begin{equation}\label{aval}
a=\frac{1}{d^2}+\frac{1}{d^2}\sqrt{\frac{n(d-1)}{d^2-n}}-\sqrt{\frac{d-1}{n(d^2-n)}}<0,
\end{equation}
\begin{equation}\label{bval}
b=\frac{1}{d^2}+\frac{1}{d^2}\sqrt{\frac{n(d-1)}{d^2-n}}>0.
\end{equation}

    Sum negativity is invariant to the ordering of the entries in a quasiprobability vector so we need only prove that one ordering is a local maximum for each $n$. Without loss of generality, we fix $\mathfrak{p}(1)=a$ and demand that $\mathfrak{p}(2)$ and $\mathfrak{p}(3)$ are not both also equal to $a$. Consider again the Lagrangian \eqref{lagrangian}. Taking the relevant derivatives and $c$ to zero, we may construct the bordered Hessian matrix from \eqref{bddhess} for our problem.
    Therefore, if $\vec{\mathfrak{p}}$ is a stationary point and if
\begin{equation}
(-1)^s
\begin{vmatrix}\label{sufficient}
-2\mu  & \cdots & 0 & 1 & 2\mathfrak{p}(1)\\
\vdots & \ddots & \vdots & \vdots&\vdots\\
0 & \cdots & -2\mu &1&2\mathfrak{p}(s) \\
1 & \cdots & 1 & 0 & 0\\
2\mathfrak{p}(1) & \cdots & 2\mathfrak{p}(s) & 0 & 0
\end{vmatrix}
>0
\end{equation}
    for $s=3,\ldots,d^2$, then $\vec{\mathfrak{p}}$ is a local maximum in sum negativity.

Recall the following identity for matrix blocks $A$, $B$, $C$, and $D$ which holds when $A$ is invertible:
\begin{equation}\label{detid}
\begin{vmatrix}
A & B\\
C& D
\end{vmatrix}
=|A||D-CA^{-1}B|.
\end{equation}
Using \eqref{detid}, the determinant of the bounded Hessian matrix in \eqref{sufficient} is
\begin{equation}
(-1)^s(2\mu)^{s-1}\left(4s\sum_{i=1}^s\mathfrak{p}(i)^2-4\sum_{i,j=1}^s \mathfrak{p}(i)\mathfrak{p}(j)\right).
\end{equation}
With the additional $(-1)^s$ term in equation \eqref{sufficient}, we see that the inequality is satisfied if
\begin{equation}\label{suffinequal}
s\sum_{i=1}^s\mathfrak{p}(i)^2-\sum_{i,j=1}^s \mathfrak{p}(i)\mathfrak{p}(j)>0
\end{equation}
for all $s$. Let $l$ denote the number of negative elements in the truncated quasiprobability vectors appearing in \eqref{sufficient}. Note that $1\leq l<s$ by our earlier assumption about the first three elements. Then
\begin{equation}
s\sum_{i=1}^s\mathfrak{p}(i)^2-\sum_{i,j=1}^s \mathfrak{p}(i)\mathfrak{p}(j)=s(l a^2+(s-l)b^2)-(la+(s-l)b)^2=(a-b)^2(s-l)l>0.
\end{equation}
Thus the stationary points consisting of two distinct values, one negative and one positive, are local maxima.

The values of $a$ and $b$ in the case with three distinct values are more complicated, but may still be obtained. Otherwise the proof in this case carries through in the same way as above with a slight modification in the last step. Let $m$ denote the number of elements equal to zero in the truncated quasiprobability vectors appearing in \eqref{sufficient}. Note that $l+m\leq s$. Then it can be verified that
\begin{equation}
s\sum_{i=1}^s\mathfrak{p}(i)^2-\sum_{i,j=1}^s \mathfrak{p}(i)\mathfrak{p}(j)=s(l a^2+(s-l-m)b^2)-(la+(s-l-m)b)^2>0,
\end{equation}
which completes the proof.
\end{proof}
We may now return to the question of extremality among Q-reps. We say that a pure state $\rho$ \textit{achieves the sum negativity} if it is an eigenvector with eigenvalue magnitude equal to $d$ times $N^1(\{Q_j\})$ of one of the partial sum matrices of the Q-rep $\{Q_j\}$.
\begin{thm}\label{extremenqpr}
    If the quasiprobability representation of a state which achieves the sum negativity of a Q-rep $\{Q_j\}$ consists of two distinct elements or three including zero, then $N^1(\{Q_j\})$ is a local maximum among all Q-reps.
\end{thm}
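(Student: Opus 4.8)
The plan is to exploit a feature already recorded in the excerpt: for \emph{every} Q-rep, the image of the pure states under $\rho\mapsto\mathfrak{q}$ lies on one and the same set, namely the sphere $S$ of normalized vectors with $L^2$-norm $\sqrt{1/d}$ defined by \eqref{constraints}. The sum-negativity landscape on $S$ is therefore a single fixed function, independent of the Q-rep; what distinguishes one Q-rep from another is only \emph{which} points of $S$ are attainable as pure-state images. This converts a statement about moving along the awkward Q-rep manifold into a statement about a fixed optimization on $S$, for which Lemma \ref{extremelemma} already supplies the local information I need.

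First I would record the two facts that make the reduction work. (i) Since every pure state has $\text{Tr}(\rho^2)=1$, the identity \eqref{innerprodidentity} with $B=C=\rho$ forces $\mathfrak{q}(\rho,\{Q_j\})\in S$ for all Q-reps and all pure $\rho$; together with the convexity remark following \eqref{neg_Q} this gives $N^1(\{Q_j\})=\max_{v\in A}N^1(v)$, where $A=A(\{Q_j\})\subseteq S$ is the attainable set. (ii) Because negativity is unavoidable, $N^1(\{Q_j\})>0$, so every achiever $\mathfrak{q}^*$ has entries of both signs; combined with the hypothesis (two distinct values, or three including zero) and Lemma \ref{extremelemma}, each achiever is a \emph{strict} local maximum of $N^1$ on $S$, the strictness coming from the positive bordered-Hessian determinant established there via inequality \eqref{suffinequal}.

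The core is then a compactness-and-contradiction step. Suppose $N^1(\{Q_j\})$ were not a local maximum among Q-reps: there would be Q-reps $\{Q_j^{(k)}\}\to\{Q_j\}$ with $N^1(\{Q_j^{(k)}\})>N^1(\{Q_j\})=:M_0$. Writing each value as $N^1(v_k)$ with $v_k=\mathfrak{q}(\rho_k,\{Q_j^{(k)}\})$ for pure $\rho_k$, I would pass to a subsequence along which $\rho_k\to\rho_\infty$ (pure states are compact). Joint continuity of $(\rho,\{Q_j\})\mapsto\mathfrak{q}$ gives $v_k\to v_\infty=\mathfrak{q}(\rho_\infty,\{Q_j\})\in A$, and continuity of $N^1$ gives $N^1(v_\infty)\ge M_0$. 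But $M_0$ is the maximum of $N^1$ over $A$, so $N^1(v_\infty)=M_0$ and $v_\infty$ is itself an achiever, hence by step (ii) a strict local maximum of $N^1$ on $S$. Since the $v_k$ lie on $S$ and converge to $v_\infty$, strictness forces $N^1(v_k)\le M_0$ for all large $k$, contradicting $N^1(v_k)>M_0$.

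The main obstacle is the continuity/compactness bookkeeping in the last paragraph: I must ensure the limit of perturbed achievers is again attainable by the \emph{original} Q-rep and is a genuine achiever of value exactly $M_0$, since a priori the maximizer could drift to a distant point of $S$. This is exactly where the Q-rep-independence of $S$ is essential: because every perturbed image stays on the same sphere, ``strict local maximum on $S$'' is a property the limiting achiever inherits no matter which Q-rep produced the nearby $v_k$, and that is what defeats the drift. A secondary point to handle carefully is to read the hypothesis as applying to \emph{every} state achieving the sum negativity (there may be several, e.g.\ related by covariance), so that whichever achiever $v_\infty$ turns out to be, it carries the two-or-three-value structure and hence the strict-local-max conclusion of Lemma \ref{extremelemma}.
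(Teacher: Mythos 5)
Your proposal is correct and takes essentially the same route as the paper's proof: the Q-rep--independence of the sphere of radius $\sqrt{1/d}$, purity of achievers via convexity, Lemmas \ref{stationary} and \ref{extremelemma} to identify achievers as local maxima on that sphere, and the inevitability of negativity to exclude the global-minimum alternative. The only difference is that your compactness-and-contradiction step (and your insistence on the ``every achiever'' reading of the hypothesis) makes explicit the no-drift-of-the-maximizer argument that the paper's terse proof leaves implicit --- a tightening of the same approach, not a different one.
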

\begin{proof}
    Recall that the quasiprobability representation of a pure state with respect to a Q-rep lies in the sphere of radius $\sqrt{1/d}$. Also recall that the convexity of the sum negativity function implies that any state which achieves the sum negativity for a Q-rep $\{Q_j\}$ is a pure state. Therefore, if, as we vary $\{Q_j\}$ over the space of Q-reps, the quasiprobability representation of a pure state which achieves the sum negativity of $\{Q_j\}$ consists of two distinct values or three
    distinct values including zero for some Q-rep $\{Q_j'\}$, then by Lemmas \ref{stationary} and \ref{extremelemma}, there is a local maximum or global minimum of the sum negativity function at $\{Q_j'\}$. As we know the appearance of negativity is inevitable, the global minimum possibility is avoided.
\end{proof}
\begin{thm}\label{exactupperbound}
 The exact upper bound for sum negativity among Q-reps in dimension $3$ is $\frac{2}{9}(\sqrt{7}-1)$.
\end{thm}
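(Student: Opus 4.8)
The plan is to collapse the two-layer optimization that defines the sum negativity of a Q-rep into a single, Q-rep-independent optimization over the sphere already analyzed in Lemmas \ref{stationary} and \ref{extremelemma}. The key observation is that for \emph{any} Q-rep $\{Q_j\}$ and \emph{any} pure state $\rho$, the quasiprobability vector $\mathfrak{q}$ automatically satisfies $\sum_j \mathfrak{q}(j)=1$ together with $\sum_j \mathfrak{q}(j)^2 = 1/d$, the latter being the $L^2$ fact noted earlier [obtained from \eqref{innerprodidentity} with $B=C=\rho$]. Hence every pure-state quasiprobability vector arising from \emph{any} $d=3$ Q-rep lies in the constraint set \eqref{constraints}. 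Because the sum negativity of a Q-rep is the maximum of $N^1(\mathfrak{q})$ over pure states, I would conclude at once that
\[
N^1(\{Q_j\}) \le M := \max\left\{N^1(\mathfrak{p}) : \textstyle\sum_j \mathfrak{p}(j)=1,\ \sum_j \mathfrak{p}(j)^2 = \tfrac{1}{d}\right\}
\]
for every Q-rep in dimension $3$, thereby decoupling the intricate geometry of the space of Q-reps from the extremization.

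The second step is to evaluate $M$ for $d=3$ using the tools already established. The constraint set is compact and $N^1$ is continuous, so $M$ is attained and the maximizer is a local maximum, which by Lemmas \ref{stationary} and \ref{extremelemma} must be one of the classified stationary points: a vector of two distinct opposite-signed values, or one of three distinct values including zero. For the two-value points, substituting \eqref{aval} and \eqref{bval} expresses the sum negativity $-na$ as a function of the number $n$ of negative entries; for $d=3$ this reduces to $\tfrac{1}{9}\big(\sqrt{2n(9-n)}-n\big)$, which I would maximize over $1\le n\le 5$ and find its largest value at $n=2$, namely $\tfrac{2}{9}(\sqrt{7}-1)$.

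For the three-value case I would repeat the computation with the zero entries deleted, so that only $D<9$ entries carry the normalization and sphere constraints; the analogous closed form $\tfrac{1}{D}\big(\sqrt{\tfrac{(D-3)\,n(D-n)}{3}}-n\big)$ then depends on both $D$ and $n$, and a short finite check over $4\le D\le 8$ confirms that every such value is strictly smaller than $\tfrac{2}{9}(\sqrt{7}-1)$. This establishes $M=\tfrac{2}{9}(\sqrt{7}-1)$. Finally, equation \eqref{maxval} already exhibits the explicit Q-rep $\{Q^{\text{max}}_j\}$ with $N^1(\{Q^{\text{max}}_j\})=\tfrac{2}{9}(\sqrt{7}-1)=M$, so the upper bound is saturated and is therefore the \emph{exact} upper bound among all $d=3$ Q-reps.

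The main obstacle is not any single calculation but the conceptual point underpinning the first step: a priori the maximizer over the abstract sphere need not be realizable as a pure-state quasiprobability vector of a genuine Q-rep, in which case $M$ would be only a loose upper bound. What makes the argument succeed—and what must be verified—is that the $n=2$ two-value maximizer is in fact achieved by $\{Q^{\text{max}}_j\}$; equivalently, one should confirm that the state achieving the sum negativity of $\{Q^{\text{max}}_j\}$ has a quasiprobability representation of exactly two distinct values, which is precisely the hypothesis of Theorem \ref{extremenqpr}. Secondary care is needed to ensure the stationary-point enumeration of Lemma \ref{stationary} is exhaustive, so that no maximizer of $N^1$ on the sphere is overlooked.
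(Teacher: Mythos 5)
Your proposal is correct and follows essentially the same route as the paper's own proof: reduce the two-layer optimization to maximizing $N^1$ over the sphere defined by \eqref{constraints}, invoke Lemmas \ref{stationary} and \ref{extremelemma} to restrict to the classified stationary points, verify by explicit calculation that the maximum occurs at $n=2$, $m=0$ giving $\tfrac{2}{9}(\sqrt{7}-1)$, and conclude exactness from the attainment by $\{Q^{\text{max}}_j\}$ in \eqref{maxval}. Your write-up is in fact slightly more explicit than the paper's (closed-form expressions for the finite check over $n$ and $D$, and the decoupling inequality $N^1(\{Q_j\})\le M$ stated outright rather than left implicit), but the underlying argument is the same.
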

\begin{proof}
    When $d=3$, explicit calculation reveals that the sum negativity of a quasiprobability vector lying in the sphere of radius $\sqrt{1/d}$ constructed with $n$ values equal to $a<0$, $m$ values equal to $0$, and $9-n-m$ values equal to $b>0$ is maximized when $n=2$ and $m=0$. Those values are, from \eqref{aval} and \eqref{bval},
\begin{equation}
a=\frac{1}{9}(1-\sqrt{7})
\text{ and }
b=\frac{1}{63}(7+2\sqrt{7}).
\end{equation}
    Lemmas \ref{stationary} and \ref{extremelemma} and the fact that our domain has no boundary (lying on the sphere of radius $\sqrt{1/d}$ and quasiprobability normalization together define a $(d^2-2)$--sphere) imply that the sum negativity of this quasiprobability vector is the global maximum value over this domain. The quasiprobability representations of the states which achieve \eqref{maxval} with respect to the Q-rep $\{Q_j^\text{max}\}$ consist of these values, and so, by Theorem
    \ref{extremenqpr}, $\frac{2}{9}(\sqrt{7}-1)$ is the maximum value
    for sum negativity over all Q-reps in dimension $3$.
\end{proof}

It is important to note that while Theorem \ref{exactupperbound} shows that \eqref{maxval} is the strict upper bound for sum negativity among all Q-reps, it does not imply that $\{Q_j^\text{max}\}$ is the unique Q-rep which achieves this bound. Numerical searching suggests that, for the fiducial main diagonal $\{0,1,0\}$, the WH orbit of \eqref{maxsumneg} is the unique Q-rep which achieves the sum negativity \eqref{maxval}, but that there is at least one other fiducial main diagonal which achieves this bound, namely the main diagonal corresponding to the major axis vertices of the ellipse defined by the first equation in \eqref{complexgenconds}: $\{\frac{1}{3},\frac{1}{3}+\frac{1}{\sqrt{3}}, \frac{1}{3}-\frac{1}{\sqrt{3}}\}$. Unfortunately, for this fiducial, we were unable to convert the numerical result to exact values.

What can be said about the lower bound for sum negativity in dimension $3$? So far, less is known, but we present the following numerically motivated conjecture:
\begin{conj}\label{lowerboundconj}
    The exact lower bound for sum negativity among WH Q-reps in dimension $3$ is $\frac{2}{3}(\cos\frac{\pi}{9}-\frac{1}{2})$.
\end{conj}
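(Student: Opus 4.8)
The plan is to recast the minimization of $N^1(\{Q_j\})$ over Weyl--Heisenberg (WH) Q-reps as a finite-dimensional optimization and show that the candidate value $V:=\tfrac{2}{3}(\cos\tfrac{\pi}{9}-\tfrac12)$, which by \eqref{minval} equals $N^1(\{Q^{\text{min}}_j\})$, is its global minimum. I would first use the explicit description of $d=3$ WH fiducials from \eqref{complexgenconds}: the diagonal traces out the ellipse $z^2+zw+w^2=z+w$ and the off-diagonals lie on the $2$-sphere $|y|^2+|x|^2+|v|^2=1$ subject to $xy+y^*v+v^*x^*=0$, so the fiducial manifold is four-real-dimensional and compact. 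By Lemma \ref{sumnegmethod}, $N^1(\{Q_j\})=-\tfrac13\min_S\lambda_{\min}(\Sigma_S)$ with $\Sigma_S:=\sum_{i\in S}Q_i$, a finite maximum of negated minimal-eigenvalue branches, hence continuous and piecewise-smooth on a compact domain; its minimum is therefore attained and the task is to certify its value.

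A useful observation is that the first two eigenvalue invariants of every partial sum are fiducial-independent: since $\text{Tr}(Q_j)=1$ and $\text{Tr}(Q_iQ_j)=3\delta_{ij}$ by \eqref{ortho3}, a size-$k$ partial sum satisfies $\text{Tr}(\Sigma_S)=k$ and $\text{Tr}(\Sigma_S^2)=3k$. The eigenvalue triple of $\Sigma_S$ is thus confined to a circle, giving the universal bound $\lambda_{\min}(\Sigma_S)\ge\tfrac13\big(k-\sqrt{2k(9-k)}\big)$ and hence $-\tfrac13\lambda_{\min}(\Sigma_S)\le\tfrac19\big(\sqrt{2k(9-k)}-k\big)$; maximizing over $k\in\{1,\dots,8\}$ at $k=2$ reproduces the upper bound $\tfrac29(\sqrt7-1)$ of Theorem \ref{exactupperbound}. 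Crucially, this machinery bounds sum negativity only from \emph{above}, so it is powerless for the lower bound: the missing ingredient is the third invariant (the determinant, or $\text{Tr}(\Sigma_S^3)$), which \emph{does} depend on the fiducial and ties the partial sums together.

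For the lower bound I would therefore exploit the determinant together with symmetry. The normalizer of the WH group (its Clifford symmetries) and the complementation relation $\Sigma_{S^c}=3\mathbb{I}-\Sigma_S$ --- so that $\lambda_{\min}(\Sigma_{S^c})=3-\lambda_{\max}(\Sigma_S)$ --- act on both the fiducial manifold and the $2^9-1$ subsets, collapsing the search to a small fundamental domain and a handful of partial-sum orbit representatives. On this domain I would identify the finitely many partial sums that can be active near $\{Q^{\text{min}}_j\}$; the algebraic signature of $V$ --- namely that $u=2\cos\tfrac{\pi}{9}$ is the real root of $u^3-3u-1$, so $V=(u-1)/3$ is a degree-three irrational --- pins down the relevant $\Sigma_S$ and its minimal eigenvector, after which $N^1(\{Q^{\text{min}}_j\})=V$ is the direct computation of \eqref{minval}. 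It then remains to show $\lambda_{\min}(\Sigma_S)\le-3V$ for \emph{some} $S$ at every WH fiducial, region by region where a given branch is active, with the non-smooth locus of eigenvalue crossings and active-set changes handled separately.

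The hard part is exactly this global, one-sided step. Unlike the upper bound, where a fiducial-independent ceiling was saturated by a single Q-rep, the analogous sphere bound here is the useless value $0$, since probability vectors with no negative entries lie on the sphere but are never realized by a genuine Q-rep. Consequently Lemmas \ref{stationary} and \ref{extremelemma} do not apply: at a minimizing Q-rep the sum-negativity-achieving state is a \emph{constrained} maximum over the four-real-dimensional pure-state image inside the sphere rather than a free stationary point, so it need not take the two- or three-value form those lemmas require. One must instead certify that the determinants of the relevant partial sums cannot simultaneously rotate all their eigenvalue triples away from the minimal configuration, and the recurring obstacle --- flagged elsewhere in the paper --- of converting the resulting numerics into exact algebraic certificates is precisely why this remains a conjecture rather than a theorem.
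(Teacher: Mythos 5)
You should first note a meta-level fact: the statement you were asked to prove is a \emph{conjecture} in the paper. The authors supply no proof, only numerical evidence, and they explicitly explain why their Theorem~\ref{exactupperbound} machinery (Lemmas~\ref{stationary} and~\ref{extremelemma}) breaks down for the lower bound: minimizing over Q-reps after maximizing over states need not land on a free stationary point of the sphere problem, and zero-negativity vectors on the sphere make the naive bound trivial. Your diagnosis of the obstruction is exactly theirs, and several of your supporting observations are correct and independently valuable: from $\text{Tr}(Q_j)=1$ and \eqref{ortho3}, every size-$k$ partial sum $\Sigma_S$ indeed satisfies $\text{Tr}(\Sigma_S)=k$ and $\text{Tr}(\Sigma_S^2)=3k$, which gives $\lambda_{\min}(\Sigma_S)\ge\frac{1}{3}\bigl(k-\sqrt{2k(9-k)}\bigr)$; maximizing $\sqrt{2k(9-k)}-k$ at $k=2$ recovers the bound $\frac{2}{9}(\sqrt{7}-1)$, an arguably cleaner route to the inequality half of Theorem~\ref{exactupperbound} than the paper's Lagrangian argument. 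The complementation identity $\Sigma_{S^c}=3\mathbb{I}-\Sigma_S$ and the minimal polynomial $u^3-3u-1$ of $u=2\cos\frac{\pi}{9}$ are also right.

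However, what you have written is a research plan, not a proof, and the gap sits precisely at the step you yourself flag as ``the hard part.'' To prove the conjecture one must show that at \emph{every} point of the four-dimensional compact fiducial manifold \eqref{complexgenconds} some subset $S$ yields $\lambda_{\min}(\Sigma_S)\le 1-2\cos\frac{\pi}{9}$, with equality somewhere (the equality case being \eqref{minval}). Your proposal to do this ``region by region'' via the third invariant and a Clifford fundamental domain is never instantiated: you do not exhibit the fundamental domain, the orbit representatives of the $2^9-1$ subsets, the identification of which partial sums are active where, or any inequality that actually forces an eigenvalue triple below the target at a generic fiducial. The sentence about certifying that determinants ``cannot simultaneously rotate all their eigenvalue triples away from the minimal configuration'' restates the goal rather than supplying a mechanism, and the non-smooth locus of branch crossings is deferred rather than handled. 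A further structural worry: the paper's numerics suggest the bound may \emph{fail} for non-WH Q-reps, so any successful argument must exploit WH covariance quantitatively, beyond merely parametrizing the fiducial manifold --- nothing in your outline yet does this. In short, your assessment agrees with the paper's (the statement remains open), your preliminary lemmas are sound, but the proposal does not close the conjecture.
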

\noindent This statement resisted our attempts to prove it in a fashion similar to Theorem \ref{exactupperbound} because for every Q-rep there exist states with zero negative elements in their quasiprobability representation (for example, the maximally mixed state). The Q-rep vector corresponding to the minimal eigenstate over all partial sums of $\{Q_j^\text{min}\}$ consists of three distinct values which each appear three times:
\begin{equation}\label{minqprobvals}
\frac{1}{9}\left(1-2\cos\frac{\pi}{9}\right) \text{, } \frac{1}{9}\left(1+2\cos\frac{2\pi}{9}\right) \text{, and } \frac{1}{9}\left(1+2\cos\frac{\pi}{9}-2\cos\frac{2\pi}{9}\right).
\end{equation}
The values in \eqref{minqprobvals} may not have significance as deep as those in the upper bound quasiprobability vector, however, as numerical searching has revealed that \eqref{minval} is achieved by a WH Q-rep for \textit{every} valid main diagonal. The off-diagonal elements in other cases were too difficult to convert to exact values. The fact that there seems to be a WH Q-rep which achieves \eqref{minval} for any main diagonal satisfying the first equation in \eqref{complexgenconds} suggests
the lower bound among Q-reps may not be saturated by WH Q-reps. Additionally, the extremal properties of general quasiprobability vectors lying on the sphere of radius $\sqrt{1/d}$ do not come to our aid when we try to find the lower bound because there is no obvious reason to hope that the process of maximizing over quantum state space (in the definition of a Q-rep sum negativity) followed by minimizing over all Q-reps (to find the lower bound for sum negativity) should result in one of the
local maxima for general quasiprobability vectors on the sphere of radius $\sqrt{1/d}$ (recall, of course, that it cannot result in the global minimum of zero negativity on this sphere because we know that the appearance of negativity is inevitable). Before, we were maximizing over both; in some sense we got lucky that the global maximum sum negativity for a quasiprobability vector on this sphere was achieved by a Q-rep vector.

\section{Further Observations about SIC Q-reps}\label{further}
In Section \ref{sumnegsic}, we noted the appearance of a rational value for the sum negativity of $\{Q_j^+\}$ in dimension $4$.  The quasiprobability vectors which achieve the sum negativity of $\{Q_j^+\}$ are of a special and familiar form; they consist of only the values $-1/8$ and $1/8$. We know from Theorem \ref{extremenqpr}, therefore, that this Q-rep is a local maximum for sum negativity among Q-reps in dimension $4$. In fact, following the exact same procedure as
in Theorem \ref{exactupperbound}, we see that $N^1(\{Q_j^+\})=1/2$ is the exact upper bound over all Q-reps!

In footnote \ref{sicsandmubs}, we mentioned that the states which achieve the sum negativity of $\{Q^-_j\}$ for the Hesse SIC in dimension $3$ form a complete set of mutually unbiased bases. The states which achieve the sum negativity for $\{Q^-_j\}$ in dimension $4$ also form a structure of possible interest. They consist of a set of $16$ vectors which have two nontrivial squared overlaps. In the terminology of reference \cite{Zauner99}, these states form a quantum design of degree $2$.

Although dimension $5$ was the last in which we were able to explicitly calculate the sum negativity for the SIC Q-reps by exhaustive combinatorial searching, we suspect that we have found the correct sum negativity for $\{Q_j^-\}$ constructed with the Hoggar SIC in dimension $8$. Rather than calculating the eigenvalues of every partial sum matrix (since this is infeasible for $2^{64}$, $8\times8$ matrices), we used a numerical local maximization procedure and around $10^6$ random pure state
seeds. The overall maximum value we found, $7/8$, occurred frequently in our data and is significantly larger than all of the smaller local maxima. Of course, we could still be falling short of the global maximum value if it occurs at very hard to access positions. The states whose quasiprobability representations achieve the sum negativity of $7/8$ consist of $28$ copies of value $-1/32$ and $36$ copies of value $5/96$. Therefore, by Theorem \ref{extremenqpr}, if these states achieve the actual
sum negativity, then $\{Q_j^-\}$ constructed with the Hoggar SIC is a local maxima among all Q-reps in dimension 8. Surprisingly, these quantum states also minimize the Shannon entropy of their Hoggar SIC representations and thus compose the ``twin'' Hoggar SIC \cite{Szymusiak2015,Stacey16}. This result and the one for $\{Q^-_j\}$ in dimension $4$ in the previous paragraph parallel the one mentioned in footnote \ref{sicsandmubs}.

\section{Discussion}\label{discussion}
In QBism, quantum states are probability distributions over a set of possible outcomes for an appropriately chosen measurement. With the understanding that quantum theory is an addition to coherence which rational agents should use to help inform their expectations for future experiences in terms of their past ones, QBists hope that the structure of quantum theory, and in particular its boundaries, can be made to suggest nature's motives. A century of quantum foundational debate should by now
have convinced us that these motives will not conform to our prejudices about reality. Although the weirdness of quantum theory has convinced some that we can no longer pretend physics is more than an exercise in instrumentalism, QBists are optimistic that there are ubiquitous and recognizably physical statements about nature yet to be made.

Prior to Zhu's paper, we had focused on characterizing the bounds of quantum theory only from within probability theory. His results reveal this approach was nearsighted. In the initial stages of this project, we had hoped to find further evidence for the centrality of SICs in quantum theory by showing that Zhu's theorem extends to another natural measure of negativity for quasiprobability representations. Indeed, it is interesting to find that the SICs do not generally play the same role in this alternate context. Why do they not? Do they still always play
\textit{some} role which is not immediately apparent? Maybe there are other families of Q-reps constructed from SICs or another structure which naturally play the same role for sum negativity. $\{Q_j^+\}$ in dimension 4 did achieve the upper bound for sum negativity. Does this happen again? Although we cannot calculate it exactly, we have some numerical evidence that the sum negativities for the Hoggar SIC Q-reps differ from the non-Hoggar SIC Q-reps in dimension $8$. This suggests that there might be an essential
relation between sum negativity and group covariance. Furthermore, the appearance of the complete set of mutually unbiased bases in dimension $3$ (mentioned in footnote \ref{sicsandmubs}) and the ``twin'' Hoggar SIC in dimension $8$ suggest a deep connection between sum negativity and minimizing Shannon entropy which warrants further exploration.

Due to the connection to contextuality, the bounds on sum negativity and $N^p$ negativity in general are likely to be of interest to the quantum computation community. A natural further direction for this research is the consideration of negativities other than ceiling and sum negativity. Perhaps the next to consider is the only non-convex integral negativity, $N^0$, which tells us the maximum number of negative elements which can appear in a Q-rep vector. On the other hand, $N^2$ negativity makes use of the most familiar distance function, and, as such, may warrant special attention. In Section \ref{whnqpr} we
established the general conditions for a $d=3$ WH Q-rep. This result and any analogous results in higher dimensions\footnote{And in dimension $8$, it may be interesting to look at general $\text{WH}\otimes\text{WH}\otimes\text{WH}$ covariant Q-reps.} may be of independent interest. Likewise, in order to pursue the exact lower bound for sum negativity in dimension $3$, we need strategies to construct non-WH Q-reps. Towards this, the general structure and symmetries inherent in Q-reps warrants exploration. It may further be interesting
to consider what can be said about quasiprobability representations obtained from non-orthogonal bases or even redundant operator frames in the negativity context.
\section*{Acknowledgements}
The authors would like to thank Marcus Appleby, Blake Stacey, and Huangjun Zhu for helpful discussions and suggestions. This research was supported by the Foundational Questions Institute Fund on the Physics of the Observer (grant FQXi-RFP-1612), a donor advised fund at the Silicon Valley Community Foundation.

\bibliographystyle{utphys}

\bibliography{master}

\end{document}